\DeclareMathOperator{\poly}{poly}
\newcommand{\class}[1]{\mathsf{#1}}
\newcommand{\EXP}[0]{\class{EXP}}
\newcommand{\NP}[0]{\class{NP}}
\newcommand{\sz}{\mathrm{Z}}
\newcommand{\sx}{\mathrm{X}}
\newcommand{\sy}{\mathrm{Y}}
\newcommand{\si}{\mathrm{I}}
\newcommand{\tX}{\tilde{X}}
\newcommand{\tZ}{\tilde{Z}}
\providecommand{\U}[1]{\protect\rule{.1in}{.1in}}
\newtheorem{theorem}{Theorem}
\newtheorem{definition}[theorem]{Definition}
\newtheorem{lemma}[theorem]{Lemma}
\newenvironment{proof}[1][Proof]{\noindent\textbf{#1.} }{\ \rule{0.5em}{0.5em}}
\begin{document}
\title{Self-Testing Quantum Error Correcting Codes: Analyzing Computational Hardness}
\author{En-Jui Kuo}
\affiliation{
Physics Division, National Center for Theoretical Sciences, Taipei 106319, Taiwan }

\author{Li-Yi Hsu}
\affiliation{
Physics Division, National Center for Theoretical Sciences, Taipei 106319, Taiwan }
\affiliation{Department of Physics, Chung Yuan Christian University, Chungli 32081, Taiwan}

\begin{abstract}
We present a generalization of the tilted Bell inequality for quantum $[[n,k,d]]$ error-correcting codes and explicitly utilize the simplest perfect code, the $[[5,1,3]]$ code, the Steane $[[7,1,3]]$ code, and Shor's $[[9,1,3]]$ code, to demonstrate the self-testing property of their respective codespaces. Additionally, we establish a framework for the proof of self-testing, as detailed in \cite{baccari2020device}, which can be generalized to the codespace of CSS stabilizers. Our method provides a self-testing scheme for $\cos\theta \ket{\bar{0}}+\sin\theta \ket{\bar{1}}$, where $\theta \in [0, \frac{\pi}{2}]$, and also discusses its experimental application. We also investigate whether such property can be generalized to qudit and show one no-go theorem.

We then define a computational problem called \textit{ISSELFTEST} and describe how this problem formulation can be interpreted as a statement that maximal violation of a specific Bell-type inequality can self-test a particular entanglement subspace. We also discuss the computational complexity of \textit{ISSELFTEST} in comparison to other classical complexity challenges and some related open problems.
\end{abstract}
\maketitle

{\it Introduction.---} 

Quantum self-testing is a significant concept in quantum information theory, focusing on the device-independent characterization of quantum systems. In simple terms, it allows researchers to infer details about quantum systems and experiments without detailed knowledge of the underlying physics, essentially treating the systems as "black boxes." At the core of quantum self-testing is the utilization of Bell nonlocality \cite{brunner2014bell, scarani2019bell}, a phenomenon first identified by John Bell in 1964. Bell nonlocality is central to understanding quantum entanglement, where Bell inequalities are used to test the nonlocal nature of quantum systems. Essentially, these inequalities provide a benchmark that quantum systems can surpass, but classical systems cannot. This violation of Bell inequalities by entangled quantum systems serves as a foundation for self-testing methodologies. The best-known example of a Bell inequality is the Clauser-Horne-Shimony-Holt (CHSH) inequality:
\begin{equation}
I_0=A_0B_0+A_0B_1+A_1B_0-A_1B_1.
\end{equation}
The maximal value of the Bell inequality can be shown using a beautiful one-line proof with the method called Sum of Squares (SOS) Techniques \cite{barak2014sum}
\begin{equation}
    \left(\frac{A_0+A_1}{\sqrt{2}}-B_0\right)^2+\left(\frac{A_0-A_1}{\sqrt{2}}-B_1\right)^2=4-\sqrt{2}I_{0}\geq 0.
\end{equation}
Its maximal quantum value is $2\sqrt{2}$ and is achieved by the maximally entangled state of two qubits $\frac{1}{\sqrt{2}}(\ket{00}+\ket{11})$ and also $B_{x}=\frac{A_0+(-1)^{x}A_1}{\sqrt{2}}.$ The reverse of this statement is the famous Tsirelson's inequality and Exact CHSH rigidity \cite{summers1987maximal, mckague2012robust}. An approximate or quantitative version of CHSH rigidity was obtained by \cite{mckague2012robust}, who stated that if you have a quantum value of $2\sqrt{2}-\epsilon$ for some small $\epsilon>0$, then there exist isometries under which the entangled state between two parties is $O(\sqrt{\epsilon})$ close to the Bell pair up to local rotation and isometry. A very good review on this topic is "Quantum multiplayer games, testing, and rigidity" by Tomas Vidick \cite{vidick2024}.

Following the original Bell inequality, many variants of Bell inequalities have been developed for qubits, qudits, and even continuous variables, including the Collins-Gisin inequality \cite{collins2004relevant} and the Collins-Gisin-Linden-Massar-Popescu (CGLMP) inequality \cite{collins2002bell, masanes2002tight, fonseca2018survey}, as well as its infinite-dimensional version \cite{zohren2008maximal}, symmetric Bell inequalities \cite{bancal2010looking}, and Bell inequalities for continuous variables \cite{cavalcanti2007bell, he2010bell, chen2002maximal}, among others.

However, not all these various Bell inequalities can achieve the robustness of maximal violation compared to the original CHSH inequality, even in the multi-party qubit case. Thus, the notion of self-testing arises. \cite{baccari2020device} defines the self-testing of an entangled subspace, not just for a single state, and uses $[[5,1,3]]$ codes and toric codes. Several works have followed, including those on graph states \cite{baccari2020scalable}.

More abstractly, one can define the class MIP*, which represents multi-prover interactive proofs where the provers are quantumly entangled, and it is equivalent to RE, the class of recursively enumerable languages. The equivalence MIP* $=$ RE \cite{ji2021mip, mousavi2020complexity} indicates that certain computational problems, previously considered unverifiable in a classical setting, can indeed be verified given quantum resources. This is closely related to the concept of self-testing quantum states, where a quantum system's state can be verified through its behavior in a specific test. However, even small noise will cause such complexity to collapse to the classical result MIP $=$ NEXP \cite{dong2023computational}.

{\it Preliminaries and Notations.---}
In this paper, we use $\sx, \sy, \sz, \si$ to represent the standard Pauli matrices, where 
\[
\sx ={\begin{pmatrix}0&1\\1&0\end{pmatrix}}, \quad \sy ={\begin{pmatrix}0&-i\\i&0\end{pmatrix}}, \quad \sz ={\begin{pmatrix}1&0\\0&-1\end{pmatrix}}.
\]
$\si$ serves as the identity operator in any dimension. We use $\mathbb{C}$ to denote the field of complex numbers. The set of integers from $1$ to $n$ is denoted by $[n].$

For each $n$-party system, denoted by $A_0^{i}$ and $A_1^{i}, \forall 1\leq i \leq n$, there exist two Hermitian operators. The only constraint is that the operators $A_x^i, x \in \{0,1 \}$ can only have eigenvalues $\pm 1$, implying that their squares are identity operators, i.e., $(A_0^{i})^2=\si=(A_1^{i})^2$ for all $i \leq n$. Since $(A_x^i)^2$ equals the identity, the minimal polynomial of $A_x^i$ is completely reducible over $\mathbb{C}$, which implies that $A_x^i$ is always diagonalizable \cite{barrera2023linear}.

{\it Bell Inequality.---}
Now, let us introduce a general expression for an $n$-multipartite Bell inequality $L$ as in \cite{baccari2020device}:
{\scriptsize
\begin{align}\label{eq:bel}
\sum_{k=1, 1 \leq i_1<i_2<..<i_k\leq N, x_{i_1}, x_{i_2},..,x_{i_N}=0,1}^{N}\alpha_{x_{i_1},x_{i_2},...,x_{i_k}}^{i_1,i_2,..., i_k}A_{x_{i_1}}^{i_1}A_{x_{i_2}}^{i_2}...A_{x_{i_k}}^{i_k}.
\end{align}
}
Even more generally, since $A$'s may not commute in a single party, we can have terms including $A^{i}_0A^{i}_1A^{i}_0$ or even $A^{i}_0A^{i}_1A^{i}_0A^{i}_1$. These terms play an important role in our construction and are sometimes referred to as monomials \cite{barizien2024custom}. We say $L$ \eqref{eq:bel} achieves a maximal violation $c$ over all possible $A$'s if for all $\ket{\psi}$, we have $\braket{L}= \braket{\psi|L|\psi}\leq c.$

\begin{figure}[t]
    \centering
\includegraphics[width=0.85\columnwidth]{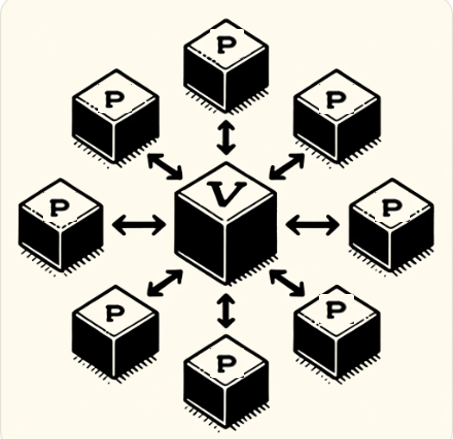}
    \caption{We describe the standard model of self-testing with multiple interactive provers. A classical verifier $\textbf{V}$ aims to certify a specific quantum property of the resource shared by non-communicating provers $\textbf{P}$, potentially quantum-entangled. The verifier communicates with the provers using classical messages (denoted by brown arrows), and the provers respond with classical outputs. Nonlocal correlations between these outputs enable the verifier to draw significant conclusions about the quantum property.}
    \label{fig:main}
\end{figure}

We now introduce the Stabilizer codes \cite{gottesman1997stabilizer}. The $N$-fold tensor products of the
Pauli operators $\{\sx, \sz, \sy, \si \}$ with the overall factor $\pm 1$ or $ \pm i$ form the Pauli group $\mathbb{P}_N$ under matrix multiplication. A
stabilizer $\mathbb{S}_N$ is any Abelian subgroup of $\mathbb{P}_N$, and the stabilizer coding space $C_N$ consists of all vectors $\ket{\psi}$ such that $S_i\ket{\psi}=\ket{\psi} \forall S_i \in \mathbb{S}_N.$ We denote 
${\displaystyle \left[[n,k, d\right]]}$ stabilizer quantum error-correcting code to encode $k$ logical qubits into 
$n$ physical qubits. $d$ is the maximal weight that one can correct.
Usually, we specify particular subset of $\mathbb{S}_N$ which called generators of stabilizer.

Here we use $S$ to denote the stabilizer code composed of Pauli strings. For example, in $[[5,1,3]]$ codes, we have four generators. We use $\tilde{S}$ to denote the operator composed of general operators $X, Y, Z, I$ and those composed of $A$ from each party.

\begin{definition} \cite{baccari2020device}
The behavior $\mathcal{P}$ self-tests the entangled subspace spanned by the set of entangled states $\{\ket{\psi_i}\}_{i=1}^k$ if for any pure state $\ket{\tilde{\psi}}_{PE}\in\mathcal{H}_{PE}$ compatible with $\mathcal{P}$ through $L$ one can deduce that: i) every local Hilbert space has the form $\mathcal{H}_{P}=\mathcal{H}_{P'}\otimes \mathcal{H}_{P''}$;
ii) there exists a local unitary transformation $U_P = U_1\otimes\cdots\otimes  U_N$ acting on $ \mathcal{H}_P $ such that
\begin{align*}
   U_P \otimes \mathbb{I}_E(\ket{\tilde{\psi}}_{PE}) = \sum_{i} {c_i}\ket{\psi_i}_{P'}\otimes\ket{\xi_i}_{P''E}
\end{align*}
for some normalised states $\ket{\xi_{i}}\in \mathcal{H}_{P''}\otimes \mathcal{H}_E$ and some nonnegative numbers $c_i\geq 0$ such that $\sum_i c_i^2 = 1$.    
\end{definition}
In this paper, we focus on two particular cases of such definition. First, we want to certify a particular 2-dimensional codespace of stabilizer codes. So in this case. We need to show there exists $U_P$ and $\ket{\tilde{\psi}}_{PE}$ satisfy
\begin{equation} 
    U_P \otimes \mathbb{I}_E\ket{\tilde{\psi}}=c\ket{\psi_1}\otimes \ket{\xi_1}+\sqrt{1-c^2}\ket{\psi_2}\otimes \ket{\xi_2}.
\end{equation}
where $\ket{\psi_1}, \ket{\psi_2}$ are two orthonormal states spanning the code space and $c \in [0,1].$ In the extreme case, when we want to certify a particular code state $\ket{\psi_1}$ above, equivalently, we need to show $ U_P \otimes \mathbb{I}_E(\ket{\tilde{\psi}})=\ket{\psi_1}\otimes \ket{\xi_1}.$ Here $\ket{\xi_1}$ is a auxiliary state.

In the following, we will first review the standard tilted Bell inequality on $\cos\theta \ket{00}+\sin \theta \ket{11}$. We also recall that the self-test of such a state $\cos\theta \ket{0}^n+\sin \theta \ket{1}^n$ has been derived \cite{baccari2020scalable}. Next, we generalize the method of \cite{baccari2020device} on $[[5,1,3]]$ codes and show how to generalize it to $\cos\theta \ket{\bar{0}}+\sin \theta \ket{\bar{1}}.$ After that, we move to the first CSS code called Steane code $[[7,1,3]]$ and provide the details on $\cos\theta \ket{\bar{0}}+\sin \theta \ket{\bar{1}}.$ Finally, we move to the Shor's $[[9,1,3]]$ code $\cos\theta \ket{\bar{0}}+\sin \theta \ket{\bar{1}}.$

In summary, we define the computational problem called \textit{ISSELFTEST} and describe how this problem formulation can be interpreted as a statement that maximal violation of a specific Bell-type inequality can self-test a particular entanglement subspace. We also discuss the computational complexity of \textit{ISSELFTEST} in comparison to other classical complexity challenges and some related open problems.

{\it Stabilizer codes using $[[5,1,3]]$ as example .---}

We now consider $[[5,1,3]]$ codes as our first example which is a perfect code and the simplest non-CSS code \cite{eczoo_qecc}
with the following stabilizers 
\begin{align*}
S_1=\sx_1\sz_2\sz_3\sx_4\si_5,S_2=\si_1\sx_2\sz_3\sz_4\sx_5,\nonumber \\
S_3=\sx_1\si_2\sx_3\sz_4\sz_5,S_4=\sz_1\sx_2\si_3\sx_4\sz_5    
\end{align*}  
We will modify the proof \cite{baccari2020device} and show how to test particular codewords. The code space denoted $C_5$ which is a two-dimensional space forming from $\ket{\bar{0}}, \ket{\bar{1}}$.
The logical operator is ${\displaystyle {\bar {\sx}}=\sx_1 \sx_2 \sx_3 \sx_4 \sx_5},  {\displaystyle {\bar {\sz}}=\sz_1 \sz_2 \sz_3 \sz_4 \sz_5}.$

We define $X, Z$ by local observable $A_x^i$.
$X_{1}'= \frac{A_0^1+A_1^1}{2 \cos \mu}, Z_{1}'= \frac{A_0^1-A_1^1}{2 \sin \mu}$ while for the remaining parties we simply setting $X_i= A_0^i, Z_i = A_1^i$ for $i=2,3,4$ and $X_1=\frac{X_1'}{|X_1'|}, Z_1=\frac{Z_1'}{|Z_1'|}.$
We set the following $4$ operators $\tilde{S}_1=X_1Z_2Z_3X_4, \tilde{S}_2=X_2Z_3Z_4X_5, \tilde{S}_3=X_3Z_4Z_5X_1, \tilde{S}_4=X_4Z_5Z_1X_2,
\tilde{X}=X_1X_2X_3X_4X_5,
\tilde{Z}=Z_1Z_2Z_3Z_4Z_5$ We have $\tilde{S}_1^2=X_1^2, \tilde{S}_2^2=1, \tilde{S}_3^2=X_1^2, \tilde{S}_4^2=Z_1^2, \tilde{X}^2=X_1^2, \tilde{Z}^2=Z_1^2.$
Also, $\tilde{P}=\cos2\theta \tilde{Z}+\sin2\theta\tilde{X}, \tilde{P}^2=\cos^2(2\theta)Z_1^2+\sin^2(2\theta)X_1^2+\sin(2\theta)\cos(2\theta)\{\tilde{X}, \tilde{Z}\}.$ We can pick $\mu=\pi/4.$

Now we can use SOS method and construct the following: Let us write the following trivial inequality: 
$\alpha_0(\tilde{P}-I)^{2}+\sum_{i=1}^{4}\alpha_i(\tilde{S}_i-I)^{2}\succeq0$ implies 
\begin{eqnarray}\label{eq:i5}
    \alpha_0 \tilde{P}^2+\sum_{i=1}^{4}\alpha_i \tilde{S}_i^2-2(\alpha_0 \tilde{P}+\sum_{i=1}^{4}\alpha_i \tilde{S}_i)+\sum_{i=0}^{4}\alpha_i\succeq0
\end{eqnarray}
If we can pick particular $\alpha_i, i=1,2,3,4$ such that $\sum_{i=1}^{4}\alpha_i \tilde{S}_i^2=\sum_{i=1}^{4}\alpha_i.$ Indeed,  \cite{baccari2020device} chooses such coefficients satisfy it. In such case:
 We have imply
\begin{equation}\label{eq:i5}
    -\alpha_0\tilde{P}^2+2\alpha_0 \tilde{P}+2\sum_{i=1}^{4}\alpha_i \tilde{S}_i \leq \alpha_0+2\sum_{i=1}^{4}\alpha_i.
\end{equation}
We can let $I_{5}'=-\alpha_0\tilde{P}^2+2\alpha_0 \tilde{P}+2\sum_{i=1}^{4}\alpha_i \tilde{S}_i$ which eq \ref{eq:i5} shows $\langle I_{5}' \rangle \leq \alpha_0+2\sum_{i=1}^{4}\alpha_{i}$ for any state $\ket{\psi}$ living in any finite Hilbert space.  Abstractly, after writing up to this stage, expanding \( I_5' \) in terms of \( A_i^j \) and setting \(\alpha_0 = 0\) will recover the Bell inequality derived in \cite{baccari2020device}.

Now we can prove the so-called self-test property proposed by \cite{baccari2020device, baccari2020scalable} when we achieve the maximal violation of $I_5'$. From SOS decomposition and $\alpha_i>0, \forall i=0,1,2,3,4$ shows $\tilde{S}_i\ket{\psi}=\ket{\psi} i \in [4].$ 
Up to now, we can use  Theorem \ref{thm:513} in appendix \ref{app:ap2} to show that these $X, Z$ satisfy $X^2=Z^2=1, XZ=-ZX$. Then utilize Lemma \ref{lem:qubit} which shows that there exists a basis such that $X, Z$ indeed has commutator relation as standard $\sx, \sz.$ There exists an unitary $U$ such that $U X U^{\dagger}=\sx \otimes \si, U Z U^{\dagger}=\sz \otimes \si$. 
Utilizing this lemma and result in appendix \ref{app:ap3} which will force the $\ket{\psi}=(a \ket{\bar{0}}\ket{\xi_1}+b \ket{\bar{1}}\ket{\xi_1})$ where $\ket{\xi_1}, \ket{\xi_2}$ are some auxiliary states. 

Now we go back to the relation $\tilde{P}\ket{\psi}=\ket{\psi}.$ Now $\tilde{P}$ indeed becomes the projector since $\tilde{P}^2=\cos^2(2\theta)Z_1^2+\sin^2(2\theta)X_1^2+\sin(2\theta)\cos(2\theta)\{\tilde{X}, \tilde{Z}\}=1.$ $\tilde{P}\ket{\psi}=\ket{\psi}$ implies that $a \cos 2 \theta + b \sin 2 \theta = a, -b \cos 2 \theta + a \sin 2 \theta = b.$ The nontrivial $a, b$ implies $a/b=\cot \theta.$ If we restrict angle $\theta \in [0, \pi/2],$ which forces $a=\cos \theta,  b = \sin \theta.$ Now we finish the proof. We restate in the following:
\begin{theorem}
$I_5'$ can certify $\cos \theta \ket{\bar{0}}+\sin \theta \ket{\bar{1}}$ by choosing $\alpha_i>0, i=0, 1,2,3,4.$ If we pick $\alpha_0=0, \alpha_i>0, i= 1,2,3,4.$ then $I_5=I_5'|_{\alpha_0=0}$ certify the code space.
\end{theorem}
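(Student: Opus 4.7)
The plan is to exploit the operator sum-of-squares identity
\begin{equation*}
\alpha_0(\tilde{P}-I)^{2}+\sum_{i=1}^{4}\alpha_i(\tilde{S}_i-I)^{2}\succeq 0,
\end{equation*}
which, after invoking the coefficient choice giving $\sum_{i=1}^4 \alpha_i \tilde{S}_i^2 = \sum_{i=1}^4 \alpha_i$, rearranges into the Bell bound $\langle I_5' \rangle \leq \alpha_0 + 2\sum_{i=1}^{4} \alpha_i$. Strict positivity of every $\alpha_i$ then ensures that saturation forces each square to annihilate $\ket{\psi}$, so $\tilde{S}_i\ket{\psi}=\ket{\psi}$ for $i=1,\dots,4$ and, whenever $\alpha_0>0$, also $\tilde{P}\ket{\psi}=\ket{\psi}$.

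Starting from the stabilizer identities $\tilde S_i\ket\psi = \ket\psi$, I would apply Theorem \ref{thm:513} to promote the rescaled party-1 observables to genuine binary Pauli-like operators satisfying $X^2=Z^2=I$ and $XZ=-ZX$, and then use Lemma \ref{lem:qubit} to produce a local unitary $U_1$ with $U_1 X U_1^\dagger = \sx\otimes\si$ and $U_1 Z U_1^\dagger = \sz\otimes\si$. Propagating the same qubit structure through the remaining parties via the appendix result yields a local isometry $U_P = U_1\otimes\cdots\otimes U_5$ such that
\begin{equation*}
(U_P\otimes\mathbb{I}_E)\ket{\tilde\psi} = a\ket{\bar 0}\otimes\ket{\xi_1}+b\ket{\bar 1}\otimes\ket{\xi_2},
\end{equation*}
with $|a|^2+|b|^2=1$ and normalized auxiliary states $\ket{\xi_1},\ket{\xi_2}$. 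For $\alpha_0=0$ no further constraint is imposed, so the self-tested set is exactly $\mathrm{span}\{\ket{\bar 0},\ket{\bar 1}\}$, delivering the codespace claim.

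For $\alpha_0>0$ I would substitute the above form into $\tilde P\ket{\psi}=\ket{\psi}$. Since $\tilde Z$ and $\tilde X$ act on the codespace as $\bar{\sz}$ and $\bar{\sx}$, one has $\tilde P\ket{\bar 0}=\cos 2\theta\,\ket{\bar 0}+\sin 2\theta\,\ket{\bar 1}$ and $\tilde P\ket{\bar 1}=\sin 2\theta\,\ket{\bar 0}-\cos 2\theta\,\ket{\bar 1}$; matching coefficients of the orthogonal $\ket{\bar 0},\ket{\bar 1}$ produces two auxiliary-state equations whose consistency forces $\ket{\xi_1}$ and $\ket{\xi_2}$ to be proportional, reducing to a common $\ket\xi$ and leaving the scalar system $a\cos 2\theta+b\sin 2\theta=a$, $a\sin 2\theta - b\cos 2\theta = b$. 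For $\theta\in[0,\pi/2]$ its unique normalized solution is $a=\cos\theta$, $b=\sin\theta$, which pins down the target state.

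I expect the main obstacle to be the well-definedness of the normalized observables $X_1 = X_1'/|X_1'|$ and $Z_1 = Z_1'/|Z_1'|$: dividing by an operator is only legitimate once one has verified invertibility on the support of the reduced state, which is precisely what Theorem \ref{thm:513} secures. Once that step is in place, the rest of the argument — the local isometry supplied by Lemma \ref{lem:qubit}, the propagation across the four remaining parties, and the final two-by-two eigenvector computation — follows routinely.
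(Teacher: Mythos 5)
Your proposal is correct and follows essentially the same route as the paper: the SOS rearrangement into the Bell bound, saturation forcing $\tilde{S}_i\ket{\psi}=\ket{\psi}$ (and $\tilde{P}\ket{\psi}=\ket{\psi}$ when $\alpha_0>0$), then Theorem \ref{thm:513}, Lemma \ref{lem:qubit}, and the Appendix \ref{app:ap3} argument to reach $a\ket{\bar{0}}\ket{\xi_1}+b\ket{\bar{1}}\ket{\xi_2}$, finishing with the same $2\times 2$ eigenvector computation giving $a=\cos\theta$, $b=\sin\theta$. Your explicit observation that the coefficient matching also forces $\ket{\xi_1}\propto\ket{\xi_2}$ is a small but welcome refinement of a step the paper states only at the scalar level.
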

The complete expression of $I_5'$ is very lengthy, so we leave it in the Appendix \ref{app:ap2}.

Even though the $[[5,1,3]]$ code is the simplest perfect code, its self-test proof is fairly nontrivial \cite{baccari2020device}. After completing the analysis for the $5$-qubit case, we now turn our attention to other stabilizers. This proof does not easily extend to other CSS codes, as will be illustrated with the $[[9,1,3]]$ code, where we address the complexities involved. Additionally, as shown in Theorem \ref{thm:q513}, this proof cannot be directly applied when the qubit is generalized to a qudit with dimension \( q \) (where \( q > 2 \)), indicating that it is not easily adaptable to qudit codes or other types of stabilizer codes.

{\it $[[9,1,3]]$ codes .---}
We now describe another scenario by simplest Shor
$[[9, 1, 3]]$ codes which is the first quantum error-correcting code. These are generators.
\begin{align*}
    S_1 &= \sz_1\sz_2, & S_2 &= \sz_1\sz_3 \nonumber \\ 
    S_3 &= \sz_4\sz_5, & S_4 &= \sz_4\sz_6 \nonumber \\ 
    S_5 &= \sz_7\sz_8, & S_6 &= \sz_7\sz_9 \nonumber \\
    S_7 &= \sx_1\sx_2\sx_3\sx_4\sx_5\sx_6, &
    S_8 &= \sx_1\sx_2\sx_3\sx_7\sx_8\sx_9.
\end{align*}
The logical operator is $\bar {\sx}=\prod_{i=1}^{9}\sx_i,  \bar {\sz}=\prod_{i=1}^{9}\sz_i.$

We set the following $X,Z$:
$X_{k}'= \frac{A_0^1+A_1^1}{2 \cos \mu_k}, Z_{k}'= \frac{A_0^1-A_1^1}{2 \sin \mu_k}, k \in \{1,4,7.\}$ and 
$X_k=\frac{X_k'}{|X_k'|}, Z_k=\frac{Z_k'}{|Z_k'|}.$
While for the remaining parties, we simply setting $X_k= A_0^k, Z_k = A_1^k$ for $k \in \{2,3,5,6,8,9\}. $ We set $\mu_k=\pi/4. $ We then define a similar object:
\begin{align*}
    \tilde{S}_1 &= Z_1Z_2,  & \tilde{S}_2 &= Z_1Z_3 \nonumber \\ 
    \tilde{S}_3 &= Z_4Z_5,  & \tilde{S}_4 &= Z_4Z_6 \nonumber \\ 
    \tilde{S}_5 &= Z_7Z_8,  & \tilde{S}_6 &= Z_7Z_9 \nonumber \\
    \tilde{S}_7 &= X_1X_2X_3X_4X_5X_6, & \tilde{S}_8 &= X_1X_2X_3X_7X_8X_9 
\end{align*}
Here we define $\tilde{S}_9=X_4X_5X_6X_7X_8X_9$ which is critical. Here one may think it is redundant to add this $\tilde{S}_9$ since $S_9=S_7S_8$. However, notice that these $\tilde{S}$ are not stabilizers yet since these $X,Z$ does not complete following the rules $X^2=Z^2=1, XZ=-ZX.$ To be more precisely, if we demand $\tilde{S}_9\ket{\psi}=\ket{\psi}$ implying $X_4X_5X_6X_7X_8X_9\ket{\psi}=\ket{\psi}.$ On the other hand, even $\tilde{S}_7\ket{\psi}=\ket{\psi}, \tilde{S}_8\ket{\psi}=\ket{\psi}$ can only imply $\tilde{S}_7\tilde{S}_8\ket{\psi}=X_1^2X_2^2X_3^2X_4X_5X_6X_7X_8X_9\ket{\psi}=\ket{\psi}.$ These two are different since we have not yet demanded $X^2=1.$
We define $\tilde{P}=\cos2\theta \tilde{Z}+\sin2\theta\tilde{X}$ as before. So here there are main difference compared to $[[5,1,3]]$. First,  we have three antisymmetric pairs not just one $1.$ Second, there is add additional constraint. The reason will be clear once \ref{thm:913} is estabilised.

Let us write down the following SOS expression:
$\alpha_0(\tilde{P}-I)^{2}+\sum_{i=1}^{9}\alpha_i(\tilde{S}_i-I)^{2}\succeq0.$  We define this as $I_9'$. 
Now we can express our results. The complete expression of $I_9'$ is very lengthy, so we leave it in the Appendix \ref{app:ap2}.
\begin{theorem}
$I_9'$ can certify $\ket{\psi}=\cos \theta \ket{\bar{0}}+\sin \theta \ket{\bar{1}}$ by choosing $\alpha_i>0, i \in \{0\} \cup [9]$ If we pick $\alpha_0=0, \alpha_i>0, i \in [9]$ then $I_9=I_9'|_{\alpha_0=0}$ certify the codespace.
\end{theorem}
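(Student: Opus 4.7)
The plan is to mirror the $[[5,1,3]]$ argument but adapt it to the CSS structure of Shor's code, with the extra stabilizer $\tilde{S}_9$ doing a specific technical job. First I would expand $\alpha_0(\tilde{P}-I)^{2}+\sum_{i=1}^{9}\alpha_i(\tilde{S}_i-I)^{2}\succeq 0$ and rearrange it into the form $\langle I_9'\rangle \le \alpha_0+2\sum_{i=1}^{9}\alpha_i$, after first checking that the choice of coefficients $\alpha_i>0$ can be made so that $\sum_{i}\alpha_i\tilde{S}_i^{2}$ collapses to a scalar on $\ket{\psi}$ (the terms $\tilde{S}_1,\ldots,\tilde{S}_6$ already square to $Z_1^{2},Z_1^{2},Z_4^{2},Z_4^{2},Z_7^{2},Z_7^{2}$, and I will show these projectors act as identity on $\ket{\psi}$ from $\tilde{S}_{2i-1}^{2}\ket{\psi}=\ket{\psi}$ applied twice). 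Saturating the bound forces $\tilde{S}_i\ket{\psi}=\ket{\psi}$ for all $i\in[9]$, and also $\tilde{P}\ket{\psi}=\ket{\psi}$ when $\alpha_0>0$.

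The core step is to promote the $\tilde{S}_i$-fixing conditions to the standard qubit algebra $X_k^{2}=Z_k^{2}=I$, $\{X_k,Z_k\}=0$ on the support of $\ket{\psi}$, for $k\in\{1,4,7\}$ (for the other six sites these hold by construction since $X_k=A_0^{k}$, $Z_k=A_1^{k}$ with $(A_x^{k})^{2}=\si$). The $Z_k^{2}\ket{\psi}=\ket{\psi}$ statements follow directly by squaring the $Z$-type stabilizer relations on $\ket{\psi}$. The $X_k^{2}\ket{\psi}=\ket{\psi}$ statements are exactly where $\tilde{S}_9$ becomes essential: computing $\tilde{S}_7\tilde{S}_8=X_1^{2}X_2^{2}X_3^{2}X_4X_5X_6X_7X_8X_9$ and using $X_2^{2}=X_3^{2}=I$ together with $\tilde{S}_9\ket{\psi}=X_4X_5X_6X_7X_8X_9\ket{\psi}=\ket{\psi}$ gives $X_1^{2}\ket{\psi}=\ket{\psi}$; the analogous products $\tilde{S}_7\tilde{S}_9$ and $\tilde{S}_8\tilde{S}_9$, each combined with one of $\tilde{S}_8$ or $\tilde{S}_7$, yield $X_4^{2}\ket{\psi}=\ket{\psi}$ and $X_7^{2}\ket{\psi}=\ket{\psi}$. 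Once these idempotency relations hold, the normalization $X_k=X_k'/|X_k'|$, $Z_k=Z_k'/|Z_k'|$ acts as the identity on $\ket{\psi}$, so the anticommutation $\{X_k',Z_k'\}=\tfrac{1}{2}((A_0^{k})^{2}-(A_1^{k})^{2})=0$ transfers to $\{X_k,Z_k\}\ket{\psi}=0$. This is the analogue of Theorem \ref{thm:513} adapted to the $[[9,1,3]]$ geometry; I expect this step to be the main technical obstacle, and the reason the redundant-looking $\tilde{S}_9$ cannot be omitted.

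Having established the Pauli algebra on $\ket{\psi}$, I would invoke Lemma \ref{lem:qubit} site-wise to produce local unitaries $U_k$ with $U_k X_k U_k^{\dagger}=\sx\otimes\si$ and $U_k Z_k U_k^{\dagger}=\sz\otimes\si$. Under $U_P=\bigotimes_{k=1}^{9}U_k$, the nine relations $\tilde{S}_i\ket{\psi}=\ket{\psi}$ become the standard Shor stabilizer conditions acting on the first tensor factor, so the argument in appendix \ref{app:ap3} (applied exactly as in the $[[5,1,3]]$ case) forces $U_P\otimes\mathbb{I}_E\ket{\tilde{\psi}}=(a\ket{\bar{0}}+b\ket{\bar{1}})\otimes\ket{\xi}$ in the two-dimensional codespace $C_9$, with $\ket{\xi}$ an auxiliary state on the remaining factors.

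Finally, I would convert the condition $\tilde{P}\ket{\psi}=\ket{\psi}$ into a constraint on $(a,b)$. After step two, $\tilde{X}$ and $\tilde{Z}$ act as genuine anticommuting involutions on $\ket{\psi}$, so $\tilde{P}^{2}\ket{\psi}=\ket{\psi}$ and $\tilde{P}$ is a projector there; the eigenvalue equation reduces to $a\cos2\theta+b\sin2\theta=a$ and $a\sin2\theta-b\cos2\theta=b$, whose unique normalized solution in $\theta\in[0,\pi/2]$ is $(a,b)=(\cos\theta,\sin\theta)$. For the second claim, choosing $\alpha_0=0$ simply drops this last constraint, leaving the entire codespace $C_9$ certified rather than a specific superposition, exactly as in the $[[5,1,3]]$ case.
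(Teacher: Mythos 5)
You have a genuine gap at the central step. Your parenthetical claim that for the six sites $k\in\{2,3,5,6,8,9\}$ the qubit algebra "holds by construction since $X_k=A_0^k$, $Z_k=A_1^k$ with $(A_x^k)^2=\si$" is wrong for the anticommutation half: $(A_0^k)^2=(A_1^k)^2=\si$ gives only $X_k^2=Z_k^2=\si$, while two $\pm1$-valued observables of one party need not anticommute at all (they could even be equal, giving $\{A_0^k,A_1^k\}=2\si$). Anticommutation is automatic only at the tilted sites $\{1,4,7\}$, where $X_k'\propto A_0^k+A_1^k$ and $Z_k'\propto A_0^k-A_1^k$. Deriving $\{X_k,Z_k\}\ket{\psi}=0$ for the other six sites from the maximal violation is precisely the content of Theorem \ref{thm:913}, and it is the step your outline omits; without it Lemma \ref{lem:qubit} cannot be invoked site-wise, the local unitaries $U_k$ for those parties need not exist, and the reduction to the Shor codespace in Appendix \ref{app:ap3} never gets off the ground. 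The paper closes this hole by pairing a $Z$-type with an $X$-type generator sharing a tilted site and propagating the anticommutation into each block: e.g.\ $\tilde{S}_1\ket{\psi}=\ket{\psi}$ gives $Z_1\ket{\psi}=Z_2\ket{\psi}$, $\tilde{S}_7\ket{\psi}=\ket{\psi}$ gives $X_1\ket{\psi}=X_2X_3X_4X_5X_6\ket{\psi}$, hence $0=\{X_1,Z_1\}\ket{\psi}=\{X_2,Z_2\}X_3X_4X_5X_6\ket{\psi}$ and so $\{X_2,Z_2\}\ket{\psi}=0$; repeating with $(\tilde{S}_2,\tilde{S}_7)$, $(\tilde{S}_3,\tilde{S}_7)$, $(\tilde{S}_4,\tilde{S}_7)$, $(\tilde{S}_5,\tilde{S}_8)$, $(\tilde{S}_6,\tilde{S}_8)$ and the built-in anticommutation at sites $1,4,7$ handles all six remaining sites. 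This is exactly why one anticommuting pair per block of the Shor code is chosen — a point your write-up never addresses.

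On the positive side, your derivation of $X_1^2\ket{\psi}=X_4^2\ket{\psi}=X_7^2\ket{\psi}=\ket{\psi}$ from $\tilde{S}_7\tilde{S}_8$, $\tilde{S}_7\tilde{S}_9$, $\tilde{S}_8\tilde{S}_9$, each combined with the remaining $X$-type relation, is correct and in fact slightly more direct than the paper's route (which extracts $X_1^4\ket{\psi}=\ket{\psi}$ from $\tilde{S}_7^2,\tilde{S}_8^2,\tilde{S}_9^2$ and then uses Hermiticity to conclude $X_1^2\ket{\psi}=\ket{\psi}$), and it correctly identifies why $\tilde{S}_9$ cannot be dropped. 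The remaining steps — saturation of the SOS bound forcing $\tilde{S}_i\ket{\psi}=\ket{\psi}$ and $\tilde{P}\ket{\psi}=\ket{\psi}$, the appeal to Lemma \ref{lem:qubit} and Appendix \ref{app:ap3}, and the eigenvalue equations $a\cos2\theta+b\sin2\theta=a$, $a\sin2\theta-b\cos2\theta=b$ fixing $(a,b)=(\cos\theta,\sin\theta)$, with $\alpha_0=0$ dropping only that last constraint — all match the paper. But until the anticommutation at sites $\{2,3,5,6,8,9\}$ is actually derived, the proposal does not prove the theorem.
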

\begin{proof}
From SOS decomposition, if $I_9'$ achieves the maximal value. Then we must have $\tilde{S}_k\ket{\psi}=\ket{\psi} k \in [9], \Tilde{P}\ket{\psi}=\ket{\psi}.$ The $\tilde{S}$ parts force $\tilde{P}$ be the projector $\tilde{P}^2=\cos^2(2\theta)Z_1^2+\sin^2(2\theta)X_1^2+\sin(2\theta)\cos(2\theta)\{\tilde{X}, \tilde{Z}\}=1$. We then first use theorem \ref{thm:913} and then self-test property in Appendix \ref{app:ap3}.
Let us write $\ket{\psi}=a \ket{\bar{0}}\ket{\xi_1}+b\ket{\bar{1}}\ket{\xi_2}$ in particular four dimensional subspace. $\tilde{P}\ket{\psi}=\ket{\psi}$ implies that $a \cos 2 \theta + b \sin 2 \theta = a, -b \cos 2 \theta + a \sin 2 \theta = b.$ The nontrivial $a, b$ implies $a/b=\cot \theta.$ If we restrict angle $\theta \in [0, \pi/2],$ which forces $a=\cos \theta,  b = \sin \theta.$ Now we finish the proof. 
\end{proof}
Notice that the anti-symmetric pair and additional constraint is crucial for us to show Theorem \ref{thm:913}.



{ \it $[[7, 1, 3]]$ codes.--- }

Here we move to our final example. We demonstrate our proof be also used on Steane code \cite{steane1996error} which is a standard famous type of error-correcting code also known as CSS code \cite{eczoo_qecc}. The Steane code has 6 generators. Let us label them using $1,2,3,5,6,7.$ The reason will be clear after.
\[
{\displaystyle {\begin{aligned}
&S_1=\sx_4\sx_5\sx_6\sx_7, \\
&S_2=\sx_2\sx_3\sx_6\sx_7, \\
&S_3=\sx_1\sx_3\sx_5\sx_7, \\
&S_5=\sz_4\sz_5\sz_6\sz_7, \\
&S_6=\sz_2\sz_3\sz_6\sz_7, \\
&S_7=\sz_1\sz_3\sz_5\sz_7
\end{aligned}}}
\]
and ${\displaystyle [[7,1,3]]}$ is the first in the family of quantum Hamming codes. The logical ${\displaystyle \bar{X}}$ and 
${\displaystyle \bar{Z}}$ gates are 
\begin{align}
\bar {\sx}=\prod_{i=1}^{7}\sx_i,  \bar {\sz}=\prod_{i=1}^{7}\sz_i.
\end{align}
We then define the following $X$ and $Z$'s by making the following substitution 
\begin{align*}
   & X_{k} = \frac{(A_0^k+A_1^k)}{2\cos \mu_k}, Z_{k} = \frac{(A_0^k-A_1^k)}{2\sin \mu_k}, k \in \{2,3,5,7\} \nonumber \\
& X_{k} = A_0^{k}, Z_{k} = A_1^{k}, k \in \{1,4,6 \}.
\end{align*}
We still pick $\mu_k=\pi/4.$
We set the following $6$ operators $\tilde{S}_1=X_4X_5X_6X_7, \tilde{S}_2=X_2X_3X_6X_7, \tilde{S}_3=X_1X_3X_5X_7, \tilde{S}_5=Z_4Z_5Z_6Z_7, \tilde{S}_6=Z_2Z_3Z_6Z_7, \tilde{S}_7=Z_1Z_3Z_5Z_7
$
as well as two logical operators. $
\tilde{X}=\prod_{i=1}^{7}X_i,
\tilde{Z}=\prod_{i=1}^{7}Z_i.$

As before, to process this proof, we add two additional operators. 
\begin{align*}
    \tilde{S}_4=X_1X_2X_5X_6, \tilde{S}_8=Z_1Z_2Z_5Z_6.
\end{align*}
Let us write down the following expression:
$\alpha_0(\tilde{P}-I)^{2}+\sum_{i=1}^{8}\alpha_i(\tilde{S}_i-I)^{2}\succeq0$ 
. We define this as $I_7'$. 
Now we can express our results. The complete expression of $I_7'$ is very lengthy, so we leave it in the Appendix \ref{app:ap2}.


\begin{theorem}
$I_7'$ can certify $\cos \theta \ket{\bar{0}}+\sin \theta \ket{\bar{1}}$ by choosing $\alpha_i>0, i \in \{0\} \cup [8]$ If we pick $\alpha_0=0, \alpha_i>0, i \in [8]$ then $I_7=I_7'|_{\alpha_0=0}$ certified the codespace.
\end{theorem}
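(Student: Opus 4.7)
The plan is to reuse the template set up for the $[[5,1,3]]$ and $[[9,1,3]]$ cases, adapting it to the pure CSS structure of the Steane code. First, I would start from the SOS inequality $\alpha_0(\tilde P-I)^2+\sum_{i=1}^{8}\alpha_i(\tilde S_i-I)^2\succeq 0$ and expand. Choosing the $\alpha_i$ so that $\sum_{i=1}^{8}\alpha_i\tilde S_i^{\,2}$ collapses to a constant (either directly or after passing to the underlying Bell-inequality expression in the $A_x^k$), this yields an upper bound $\langle -\alpha_0\tilde P^2+2\alpha_0\tilde P+2\sum_i\alpha_i\tilde S_i\rangle\leq \alpha_0+2\sum_i\alpha_i$, which is precisely $I_7'$. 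If a state $\ket\psi$ saturates this bound, positivity of every coefficient in the SOS decomposition forces the operator identities $\tilde S_i\ket\psi=\ket\psi$ for $i\in[8]$, and in addition $\tilde P\ket\psi=\ket\psi$ when $\alpha_0>0$.

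Next, I would establish the proper qubit algebra at each site by proving a Steane analogue of Theorems \ref{thm:513} and \ref{thm:913} (Theorem \ref{thm:713} to be placed in Appendix \ref{app:ap2}): from the $\tilde S_i\ket\psi=\ket\psi$ relations one must deduce that $X_k^2=Z_k^2=I$ and $\{X_k,Z_k\}=0$ on the support of $\ket\psi$, for every relevant site $k$. This is precisely where the \emph{added} generators $\tilde S_4=X_1X_2X_5X_6$ and $\tilde S_8=Z_1Z_2Z_5Z_6$ play their role: although $S_4,S_8$ are redundant as Pauli stabilizers, their operator counterparts are not redundant because at the tilted sites $k\in\{2,3,5,7\}$ we have not yet established $X_k^2=Z_k^2=I$, so arbitrary products of the other $\tilde S_i$ do not reproduce them. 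Combining all eight stabilizer equations (in particular, taking products like $\tilde S_1\tilde S_2\tilde S_3\tilde S_4$ and $\tilde S_5\tilde S_6\tilde S_7\tilde S_8$ on $\ket\psi$, and cross-comparing X-type with Z-type identities on the overlapping supports) isolates $X_k^2,Z_k^2$ and $X_kZ_k+Z_kX_k$ acting as zero on $\ket\psi$, exactly as in the $[[9,1,3]]$ argument built around $\tilde S_9$.

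With the qubit algebra secured, Lemma \ref{lem:qubit} gives a local unitary $U_k$ at each site splitting $\mathcal H_k=\mathbb{C}^2\otimes\mathcal H_k''$ so that $U_kX_kU_k^\dagger=\sx\otimes\si$ and $U_kZ_kU_k^\dagger=\sz\otimes\si$. The general self-testing argument recalled in Appendix \ref{app:ap3} then rewrites $(U_P\otimes\mathbb I_E)\ket{\tilde\psi}=a\ket{\bar 0}\ket{\xi_1}+b\ket{\bar 1}\ket{\xi_2}$ in the Steane logical basis. To finish, if $\alpha_0>0$ the constraint $\tilde P\ket\psi=\ket\psi$, together with the fact that $\tilde P^2=I$ on the stabilizer-fixed subspace, yields the linear system $a\cos 2\theta+b\sin 2\theta=a$ and $a\sin 2\theta-b\cos 2\theta=b$, whose only solution for $\theta\in[0,\pi/2]$ (up to a trivial common phase absorbed into $\ket{\xi}$) is $a=\cos\theta$, $b=\sin\theta$. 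If instead $\alpha_0=0$, the $\tilde P$-relation disappears and any $(a,b)$ with $|a|^2+|b|^2=1$ survives, so $I_7$ certifies the full two-dimensional codespace.

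The main obstacle will be the appendix theorem establishing the Pauli algebra at each site, because the Steane code is strictly CSS: the $X$-type and $Z$-type generators do not share Pauli factors of mixed type on any single qubit, so the direct anticommutation derivation used for $[[5,1,3]]$ (where each stabilizer already contains both $\sx$ and $\sz$ factors) is unavailable. The compensating structure one must exploit is the Hamming-code overlap pattern: every $X$-type generator shares a three-qubit support with every $Z$-type generator, and the added $\tilde S_4,\tilde S_8$ (chosen precisely to mirror each other's support) produce enough independent identities to force $X_kZ_k=-Z_kX_k$ on $\ket\psi$ at the overlap sites. Verifying that the added pair suffices, and that no further operator needs to be appended, is the technical heart of the proof.
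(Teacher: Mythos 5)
Your overall route is exactly the paper's: SOS positivity forces $\tilde S_i\ket\psi=\ket\psi$ for $i\in[8]$ (and $\tilde P\ket\psi=\ket\psi$ when $\alpha_0>0$), a site-algebra theorem upgrades these to $X_k^2=Z_k^2=1$ and $\{X_k,Z_k\}=0$ on the support, Lemma \ref{lem:qubit} plus the Appendix \ref{app:ap3} argument put the state in the form $a\ket{\bar 0}\ket{\xi_1}+b\ket{\bar 1}\ket{\xi_2}$, and the $\tilde P$ eigenvalue equation pins $a=\cos\theta$, $b=\sin\theta$; you also correctly identify why the redundant Pauli stabilizers $\tilde S_4, \tilde S_8$ are not redundant as operators. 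Two remarks on details: the paper does not choose the $\alpha_i$ so that $\sum_i\alpha_i\tilde S_i^2$ collapses to a constant (with four tilted sites the squares are quartic in the $A$'s and do not collapse); it simply keeps the terms $-\sum_i\alpha_i\tilde S_i^2-\alpha_0\tilde P^2$ inside the definition of $I_7'$, which is all the certification argument needs.

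The genuine gap is that the "technical heart" you defer -- the Steane analogue of Theorems \ref{thm:513}/\ref{thm:913}, i.e.\ Theorem \ref{thm:713} -- is only gestured at, and your sketch of its mechanism is partly wrong. Your claim that every $X$-type generator shares a three-qubit support with every $Z$-type generator is false (distinct Hamming checks overlap on two qubits; identical-support $X/Z$ partners overlap on four), and products such as $\tilde S_1\tilde S_2\tilde S_3\tilde S_4$ are not what does the work. What the paper actually exploits is the pairing of each $X$-type generator with its identically supported $Z$-type partner: from $(\tilde S_3,\tilde S_7)$ on $\{1,3,5,7\}$ one writes $X_1\ket\psi=X_3X_5X_7\ket\psi$ and $Z_1\ket\psi=Z_3Z_5Z_7\ket\psi$, and the built-in anticommutation at the tilted sites $3,5,7$ yields $\{X_1,Z_1\}\ket\psi=0$; the pair $(\tilde S_2,\tilde S_6)$ on $\{2,3,6,7\}$ gives site $6$, and only then can $(\tilde S_1,\tilde S_5)$ on $\{4,5,6,7\}$ give site $4$, because that support contains the second untilted site $6$ whose anticommutation must already be known -- an ordering constraint your "overlap" picture misses. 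The squares at the tilted sites then come from $\tilde S_1^2,\dots,\tilde S_4^2$ acting as the identity on $\ket\psi$, i.e.\ $X_5^2X_7^2\ket\psi=X_2^2X_3^2X_7^2\ket\psi=X_3^2X_5^2X_7^2\ket\psi=X_2^2X_5^2\ket\psi=\ket\psi$ (and the $Z$ analogues from $\tilde S_5^2,\dots,\tilde S_8^2$), combined with the Hermitian fourth-power argument $A^4\ket\psi=\ket\psi\Rightarrow A^2\ket\psi=\ket\psi$ used for $[[9,1,3]]$; this is precisely where $\tilde S_4$ and $\tilde S_8$ earn their keep. Without these concrete derivations your proposal asserts, rather than proves, the key lemma on which the theorem rests.
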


\begin{proof}
From SOS decomposition, if $I_7'$ achieves the maximal value. Then we must have $\tilde{S}_k\ket{\psi}=\ket{\psi} k \in [8], \Tilde{P}\ket{\psi}=\ket{\psi}.$ If we first focus on the $\tilde{S}_k\ket{\psi}=\ket{\psi} k \in [8].$ Using the similar approach in theorem \ref{thm:713} and Appendix \ref{app:ap3}, we can make sure our state $\ket{\tilde{\psi}}=c\ket{\psi_1}\otimes \ket{\xi_1}+\sqrt{1-c^2}\ket{\psi_2}\otimes \ket{\xi_2}$ and $\ket{\psi_1}, \ket{\psi_2}$ are orthonormal state of codespace. We can pick $\ket{\psi_1}=\ket{\bar{0}}, \ket{\psi_2}=\ket{\bar{1}}.$ As now $\tilde{P}$ indeed becomes the projector $\tilde{P}^2=\cos^2(2\theta)Z_1^2+\sin^2(2\theta)X_1^2+\sin(2\theta)\cos(2\theta)\{\tilde{X}, \tilde{Z}\}=1$ and we also have  $U\tilde{P}U^{\dagger}=(\cos 2\theta \tilde{\sz}+ \sin 2\theta \tilde{\sx}) \otimes \si.$ 
We have $\ket{\psi}=a \ket{\bar{0}}\ket{\xi_1}+b\ket{\bar{1}}\ket{\xi_2}$ in particular two dimensional subspace. $\tilde{P}\ket{\psi}=\ket{\psi}$ implies that $a \cos 2 \theta + b \sin 2 \theta = a, -b \cos 2 \theta + a \sin 2 \theta = b.$ The nontrivial $a, b$ implies $a/b=\cot \theta.$ If we restrict angle $\theta \in [0, \pi/2],$ which forces $a=\cos \theta,  b = \sin \theta.$ Now we finish the proof. The nontrivial part is to design suitable $\tilde{S}$ and find the anticommutating pair such that theorem \ref{thm:713} can be true. Notice that in Appendix \ref{app:app4} we discuss some related works \cite{makuta2021self}.
\end{proof}

{\it General Framework.---}
Now we formalize the general framework of certifying some codewords of the stabilizer codes $\mathbb{S}_N.$

\begin{definition}\textit{ISSELFTEST}\label{def:is}
Given a set of operators $\tilde{S}_k$ such that $\tilde{S}_k \ket{\psi} = \ket{\psi}$, we want to determine whether (or find) there exists a subset $A \subset [n]$ such that the following conditions are met:
\begin{align*}
    \begin{Bmatrix}
\{X_k, Z_k\}=0, k \in A\\
\tilde{S}_k\ket{\psi}=\ket{\psi},\forall \tilde{S}_k \in S.\\
X_k^2=Z_k^2=1,\forall k \in [n]/A
\end{Bmatrix} \Rightarrow  \begin{Bmatrix}
X_k^2\ket{\psi}=Z_k^2\ket{\psi}=\ket{\psi},\forall \tilde{S}_k \in S \\
\{X_k, Z_k\}=0, \forall k \geq 1\\
\end{Bmatrix}.
\end{align*}
When the left-hand side has $\poly(n)$ many $\tilde{S}_k$, we call this decision problem \textit{ISSELFTEST}. We can define two versions. The first version requires $\tilde{S}_k$ to be the product of $X, Z$ operators, namely $\tilde{S}_k$ is a projector. In another case, one can define $S_k$ to be the general operator, for example, a tilted GHZ state \cite{baccari2020scalable} and its tilted operator.
\end{definition}
Let us first describe why this definition is useful.
For a given $\tilde{S}_i$ term (expressed in terms of a product of $X, Z$), we make the following substitutions:
\begin{align*}
   & X_{k} = \frac{(A_0^1+A_1^1)}{2\cos \mu}, Z_{k} =  \frac{(A_0^1-A_1^1)}{2\sin \mu}, k \in A \nonumber \\
& X_{k} = A_0^{k}, Z_{k} =  A_1^{k}, k \in [n]/A.
\end{align*}
For $\alpha_i > 0, \forall i$, we can write $\sum_{\tilde{S}_i \in S} \alpha_i (\tilde{S}_i - I)^2 \succeq 0$ by defining the following Bell inequality as $I_n = f(A_0^{k}, A_1^{k}), , k \in [n].$ Previous SOS theory implies $\braket{I_n} \leq c_n$ for some number $c_n > 0.$
If such a violation achieves the maximum, we trivially have $\tilde{S}_k \ket{\psi} = \ket{\psi}, \forall \tilde{S}_k \in S.$ However, definition \ref{def:is} will utilize \ref{lem:qubit} in Appendix \ref{app:ap1}, which allows us to introduce the set of local unitary operations $U_i: \mathcal{H}_{P_i} \to \mathcal{H}_{P_i}$ such that:
\begin{eqnarray}
U_i Z_i U_i^{\dagger} = \sz_i \otimes \si, \quad U_i X_i U_i^{\dagger} = \sx_i \otimes \si
\end{eqnarray}
with $i = 1, 2, ..., n.$
In Appendix \ref{app:ap3}, we then show that this criterion already certifies such a codespace.

To discuss the computational complexity of this problem, we begin by defining the relevant classes. The class \(\EXP\) (exponential time) consists of problems that can be solved by an algorithm in \(O(2^{p(n)})\) time for some polynomial \(p(n)\), while \(\NP\) (nondeterministic polynomial time) includes problems for which a solution can be verified in polynomial time. It is obvious that \textit{ISSELFTEST} $\in \EXP$ because there are only \(2^n\) choices for $A$ that can be enumerated to see whether such a deduction can be derived. 
It is intriguing to consider whether this problem belongs to the $\NP$ or $\NP$-complete class. 
An open question remains whether one can always design such a set of \(\tilde{S}_k\) that can certify any codespace.

{\it Outlook.---}

In this paper, we discuss how to self-test for a particular codespace of error-correcting code and then find a self-test procedure for a specific code state. We use the $[[5,1,3]]$, $[[7,1,3]]$, and $[[9,1,3]]$ codes as our examples. The key technicality lies in imposing additional stabilizer constraints and identifying specific combinations of anti-symmetric pairs to make these constraints work. We conclude by mentioning possible future directions.

One interesting generalization is to investigate whether this self-test method can be applied to qudits. The maximal violation of Bell states for qudits has been derived \cite{sarkar2021self, pauwels2022almost}. One obstacle is that, in the qudit case, the $\sx$ and $\sz$ operators cannot both be Hermitian. Thus, it is intriguing to extend this framework to bosonic codes using the bosonic analog of stabilizer codes \cite{eczoo_qecc}, and also to the XP stabilizer formalism \cite{webster2023xp} or XS formalism \cite{ni2015non}. Additionally, single-party scenarios are worth exploring \cite{metger2021self}.

It is also interesting to consider the robustness of the self-test procedure on noisy quantum platforms (NISQ) \cite{preskill2018quantum, bharti2022noisy}. NISQ devices can still be useful for certain types of quantum algorithms, such as the Variational Quantum Eigensolver (VQE) \cite{kandala2017hardware} and Quantum Optimization Algorithms (QAOA) \cite{streif2019comparison}. Standard NISQ platforms include superconducting qubits \cite{wallraff2004strong, krantz2019quantum, larsen2015semiconductor}, trapped ions \cite{bruzewicz2019trapped}, optical systems \cite{o2007optical}, and Rydberg atoms \cite{wu2021concise}. These platforms represent the most promising technologies for building practical quantum computers with a few dozen qubits, paving the way toward quantum advantage. 

Given the increasing prominence of NISQ devices, the study of noisy self-testing states is becoming an important and popular research topic for future exploration.

{\it Acknowledgments.---} 
EJK and LYH thank the National Center for Theoretical Sciences in Taiwan for their support (EJK: 112-2124-M-002-003). 
LYH acknowledges financial support from the National Science and Technology Council in Taiwan.

\bibliography{sample}

\newpage
\clearpage

\section*{SUPPLEMENTAL MATERIAL}

\appendix

\appendix
\section{}\label{app:ap1}
Here we provide full proofs of the self-testing statements for the subspaces considered in the main text. For completeness, we first recall a very useful fact, proven already in Refs. \cite{popescu1992states,kaniewski2016analytic}, that is used in our proofs.

\begin{lemma}\label{lem:qubit}
\cite{popescu1992states,kaniewski2016analytic} Consider two Hermitian operators $\tX$ and $\tZ$ acting on a Hilbert space $\mathcal{H}$ of dimension $D < \infty$ and satisfying the idempotency property $\tX^2 = \tZ^2 = \emph{\openone}$, as well as the anticommutation relation $\lbrace \tX , \tZ \rbrace = 0$. Then, $\mathcal{H} = \mathbb{C}^2 \otimes \mathbb{C}^d$ for some $d$ such that $D = 2d$, and there exists a local unitary operator $U$ for which

\begin{align}\label{iden}
U \tX U^\dagger = \sx \otimes \emph{\openone}_d \, , 
\qquad U \tZ U^\dagger = \sz \otimes \emph{\openone}_d \, ,  
\end{align}

where $\sx$ and $\sz$ are the $2 \times 2$ Pauli matrices introduced before and $\emph{\openone}_d$ is the identity matrix acting on the $d$-dimensional auxiliary Hilbert space. We refer this proof to \cite{baccari2020device, baccari2020scalable}.
\end{lemma}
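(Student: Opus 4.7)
The plan is to build the claimed tensor product decomposition directly out of the spectral data of $\tZ$, using $\tX$ as the intertwiner between its two eigenspaces. First I would note that since $\tZ$ is Hermitian and squares to the identity, its spectrum lies in $\{+1,-1\}$, giving an orthogonal splitting $\mathcal{H}=\mathcal{H}_{+}\oplus \mathcal{H}_{-}$ into the $\pm 1$ eigenspaces. The anticommutation $\tX\tZ=-\tZ\tX$ immediately implies that for $|v\rangle\in\mathcal{H}_{\pm}$ one has $\tZ\tX|v\rangle=\mp \tX|v\rangle$, so $\tX$ swaps the two eigenspaces: $\tX(\mathcal{H}_{+})\subseteq\mathcal{H}_{-}$ and symmetrically.

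Next I would exploit $\tX^{2}=\si$: together with Hermiticity this makes $\tX$ a unitary involution, so its restriction to $\mathcal{H}_{+}$ is an isometric bijection onto $\mathcal{H}_{-}$. In particular $d:=\dim\mathcal{H}_{+}=\dim\mathcal{H}_{-}$ and hence $D=2d$. To exhibit the decomposition concretely, I would pick any orthonormal basis $\{|i\rangle\}_{i=1}^{d}$ of $\mathcal{H}_{+}$ and set $|i'\rangle:=\tX|i\rangle$; unitarity of $\tX$ ensures $\{|i'\rangle\}$ is an orthonormal basis of $\mathcal{H}_{-}$. The map $U:\mathcal{H}\to\mathbb{C}^{2}\otimes\mathbb{C}^{d}$ defined on these bases by $|i\rangle\mapsto |0\rangle\otimes|i\rangle$ and $|i'\rangle\mapsto |1\rangle\otimes|i\rangle$ is then unitary by construction. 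In this basis $\tZ$ acts as $+1$ on the $|0\rangle$-block and $-1$ on the $|1\rangle$-block, reproducing $\sz\otimes \si_{d}$; and $\tX$ swaps the two blocks (using $\tX^{2}=\si$ for the reverse direction), reproducing $\sx\otimes \si_{d}$.

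There is no serious obstacle: the lemma is essentially the statement that two anticommuting Hermitian involutions generate a representation of the single-qubit Pauli algebra, and any finite-dimensional such representation decomposes into copies of the unique two-dimensional irreducible one. The only point that demands a little care is checking that the construction is consistent on both blocks at once, i.e.\ that $\tX|i'\rangle = |i\rangle$; this is automatic from $\tX^{2}=\si$ and is what prevents the matrix of $\tX$ in the constructed basis from being anything more complicated than $\sx\otimes \si_{d}$.
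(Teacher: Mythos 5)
Your proof is correct and complete. Note that the paper itself does not prove this lemma --- it only cites Refs.~\cite{popescu1992states,kaniewski2016analytic} and defers the argument to \cite{baccari2020device,baccari2020scalable} --- and your argument (spectral splitting $\mathcal{H}=\mathcal{H}_{+}\oplus\mathcal{H}_{-}$ of the involution $\tZ$, with $\tX$ acting as a unitary intertwiner that pairs the two eigenspaces and fixes the basis identification) is precisely the standard proof found in those references, so it supplies exactly what the paper omits; the only points needing care, the equality $\dim\mathcal{H}_{+}=\dim\mathcal{H}_{-}$ and the consistency $\tX\ket{i'}=\ket{i}$, are both handled correctly via $\tX^{2}=\openone$ and unitarity of $\tX$.
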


Here we review the self-test property of the $[[5,1,3]]$ code \cite{baccari2020device}.

\begin{theorem}\label{thm:513}
\begin{align*}
    \begin{Bmatrix}
    \{X_1, Z_1\}=0\\
    \tilde{S}_k\ket{\psi}=\ket{\psi},\forall k \in [4].\\
    X_k^2=Z_k^2=1,\forall 2 \leq k \leq 5
    \end{Bmatrix} \Rightarrow  \begin{Bmatrix}
    X_k^2=Z_k^2=1,\forall k \geq 1 \\
    \{X_k, Z_k\}=0, \forall k \geq 1\\
    \end{Bmatrix}
\end{align*}
\end{theorem}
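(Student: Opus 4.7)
The plan is to exploit the stabilizer eigenvalue equations $\tilde{S}_i \ket{\psi} = \ket{\psi}$ as a source of commutator identities: for any two indices, $[\tilde{S}_i, \tilde{S}_j] \ket{\psi} = 0$. By choosing well-matched pairs and simplifying with the hypothesis $X_k^2 = Z_k^2 = I$ for $k \geq 2$ together with the seed anticommutation $\{X_1, Z_1\} = 0$, each such commutator should rearrange into the form $(Z_1 X_1)\{X_k, Z_k\} M \ket{\psi} = 0$ for a target party $k$, where $M$ is a product of single-site operators on the remaining parties and hence invertible; this yields $\{X_k, Z_k\} \ket{\psi} = 0$. The operator-level squaring identities $X_1^2 = Z_1^2 = I$ on the right-hand side are built into the polar-decomposition definitions $X_1 = X_1'/|X_1'|$, $Z_1 = Z_1'/|Z_1'|$, so only the four new anticommutators at parties $2,3,4,5$ must be produced.

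First I would introduce the implicit fifth stabilizer $\tilde{S}_5 := \tilde{S}_1 \tilde{S}_2 \tilde{S}_3 \tilde{S}_4$, which automatically fixes $\ket{\psi}$; expanded party-by-party using the squaring identities it collapses to $Z_1 Z_2 X_3 I_4 X_5$, the cyclic completion of the $[[5,1,3]]$ generator pattern. Without this extra operator the target parties $3$ and $5$ would be unreachable by the commutator trick.

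Then, for each target party $k \in \{2,3,4,5\}$, the task is to pick the pair $(\tilde{S}_i, \tilde{S}_j)$ whose overlaps satisfy: (i) party $1$ is shared with one contributing $X_1$ and the other $Z_1$, (ii) party $k$ is shared with one contributing $X_k$ and the other $Z_k$, and (iii) every other shared party hosts matching operators ($X_m \cdot X_m$ or $Z_m \cdot Z_m$) that collapse via $X_m^2 = Z_m^2 = I$. The pairs that meet this are $(\tilde{S}_1, \tilde{S}_4)$ for $k=2$, $(\tilde{S}_1, \tilde{S}_5)$ for $k=3$, $(\tilde{S}_3, \tilde{S}_4)$ for $k=4$, and $(\tilde{S}_3, \tilde{S}_5)$ for $k=5$. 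In each case the commutator $[\tilde{S}_i, \tilde{S}_j]$ factors cleanly (up to a sign from $X_1 Z_1 = -Z_1 X_1$) as $(Z_1 X_1)\{X_k, Z_k\} M$, and invertibility of the prefactor delivers the desired anticommutator on $\ket{\psi}$.

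The main obstacle will be the combinatorial step of identifying these four clean pairs: there is no obvious algorithm, only inspection aided by the cyclic symmetry of the code and by the requirement that party $1$ supplies the sign flip. Without $\tilde{S}_5$ the party-$3$ and party-$5$ anticommutators cannot be isolated, so its introduction is essential rather than cosmetic. A smaller subtlety is that the derivations yield $\{X_k, Z_k\}\ket{\psi} = 0$ only on the state, whereas the statement reads operator-wise; this is handled in the standard way by restricting to the invariant subspace spanned by polynomials in the $X_k, Z_k$ applied to $\ket{\psi}$ (where state-level equalities lift to operator-level ones) and extending the operators arbitrarily on the orthogonal complement, after which Lemma \ref{lem:qubit} can be applied.
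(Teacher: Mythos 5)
Your pair-selection algebra is correct, but there are two genuine gaps, both located at party $1$. First, the conclusion $X_1^2\ket{\psi}=Z_1^2\ket{\psi}=\ket{\psi}$ is itself part of what the theorem asserts, and it cannot be discharged by appealing to the normalization $X_1=X_1'/|X_1'|$: the statement is an implication among the listed hypotheses, which say nothing about party-$1$ squares (and the normalized operator is not automatically a unitary involution on the kernel of $X_1'$ without a further convention). The paper derives this in one line: $\tilde{S}_1\ket{\psi}=\tilde{S}_4\ket{\psi}=\ket{\psi}$ gives $\tilde{S}_1^2\ket{\psi}=\tilde{S}_4^2\ket{\psi}=\ket{\psi}$, and since $X_k^2=Z_k^2=1$ for $k\ge 2$ one has $\tilde{S}_1^2=X_1^2$ and $\tilde{S}_4^2=Z_1^2$. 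Second, your cancellation step silently needs exactly this fact: from $Z_1X_1\{X_k,Z_k\}M\ket{\psi}=0$ you cannot strip the prefactor $Z_1X_1$, because nothing in the hypotheses makes $X_1$ or $Z_1$ invertible (only $M$, supported on parties $\ge 2$, is). The step is repairable --- left-multiply by $X_1Z_1$, use $[X_1^2,Z_1]=[Z_1^2,X_1]=0$ (a consequence of $\{X_1,Z_1\}=0$), commute the squares past the other parties, and invoke the just-derived $X_1^2\ket{\psi}=Z_1^2\ket{\psi}=\ket{\psi}$ --- but as written it is unjustified. The same missing ingredient is used when you collapse $\tilde{S}_1\tilde{S}_2\tilde{S}_3\tilde{S}_4$ to $Z_1Z_2X_3X_5$: the party-$1$ string in that product is $X_1X_1Z_1$, so the collapse again requires $X_1^2$ to act as the identity, which must be proved, not assumed.

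On the comparison: your claim that the extra operator $\tilde{S}_5$ is essential is false. The paper reaches all four parties with $\tilde{S}_1,\dots,\tilde{S}_4$ alone, by isolating single-site relations such as $X_2\ket{\psi}=Z_1X_4Z_5\ket{\psi}$ and $Z_2\ket{\psi}=X_1Z_3X_4\ket{\psi}$, so that $\{X_2,Z_2\}\ket{\psi}=\{X_1,Z_1\}Z_3Z_5\ket{\psi}=0$; note that the known-zero anticommutator appears as a left factor, so no inversion is ever required. It then chains: $(\tilde{S}_3,\tilde{S}_4)$ gives $\{X_4,Z_4\}\ket{\psi}=\{X_1,Z_1\}X_2X_3\ket{\psi}=0$, $(\tilde{S}_2,\tilde{S}_4)$ gives $\{X_5,Z_5\}\ket{\psi}=\{X_4,Z_4\}Z_1Z_3\ket{\psi}=0$ using the previously derived relation (commuting $Z_1Z_3$ past the anticommutator), and $(\tilde{S}_2,\tilde{S}_3)$ gives $\{X_3,Z_3\}\ket{\psi}=\{X_5,Z_5\}Z_1Z_2\ket{\psi}=0$. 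Once your two gaps are patched, your commutator route with the product stabilizer is a valid and somewhat more symmetric alternative (every site seeded directly by $\{X_1,Z_1\}$), but the necessity claim for $\tilde{S}_5$ should be dropped; your closing remark about lifting the state-level relations so that Lemma \ref{lem:qubit} applies is consistent with how the paper treats the support of $\ket{\psi}$.
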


\begin{proof}
First, we show $X_1^2=Z_1^2=1$ on the support of the state. Namely, $X_1^2\ket{\psi}=Z_1^2\ket{\psi}=\ket{\psi}$. To prove that they also square to identity, we use the definitions for $\tilde{S}_1$ and $\tilde{S}_4$, obtaining
\begin{equation}
\tilde{S}_1^2\ket{\psi}=\tilde{S}_4^2\ket{\psi}=\ket{\psi},
\end{equation}
which, due to the fact that $X_k^2=Z_k^2=1$ for $k=2,\ldots,4$, immediately implies that $X_1^2\ket{\psi}=Z_1^2\ket{\psi}=\ket{\psi}$.

The second step is to show $\{X_k, Z_k\}=0 \forall k \geq 2$. Notice that, by definition, they already satisfy $X_k^2=Z_k^2=1$. To this end, we rewrite the conditions for $k = 1,4$ as

\begin{align}
    X_2 \ket{\psi} &= Z_1 X_4 Z_5 \ket{\psi}, \nonumber \\
    Z_2 \ket{\psi} &= X_1 Z_3 X_4 \ket{\psi},
\end{align}
which leads us to
\begin{equation}
    \lbrace X_2 , Z_2 \rbrace \ket{\psi} = \lbrace X_1 , Z_1 \rbrace Z_3 Z_5 \ket{\psi} = 0,
\end{equation}
where the last equality is a consequence of the anticommutation of $X_1, Z_1$. In a similar way, one can exploit the operator relations stemming from Eq. (4) to obtain anticommutation relations for the remaining three sites. Indeed, we can combine the conditions for $\tilde{S}_3$ and $\tilde{S}_4$ to get
\begin{equation}
 \lbrace X_4 , Z_4 \rbrace \ket{\psi} = \lbrace X_1 , Z_1 \rbrace X_2 X_3 \ket{\psi} = 0 \, ,
\end{equation}
and combine $\tilde{S}_2$ and $\tilde{S}_4$ to obtain
\begin{equation}
 \lbrace X_5 , Z_5 \rbrace \ket{\psi} = \lbrace X_4 , Z_4 \rbrace Z_1 Z_3 \ket{\psi} = 0 \, .
\end{equation}
Finally, we combine the conditions arising from $\tilde{S}_2$ and $\tilde{S}_3$ to show
\begin{equation}
 \lbrace X_3 , Z_3 \rbrace \ket{\psi} = \lbrace X_5 , Z_5 \rbrace Z_1 Z_2 \ket{\psi} = 0 \, .
\end{equation}
\end{proof}

\begin{theorem}\label{thm:913}
\begin{align*}
    \begin{Bmatrix}
    \{X_k, Z_k\}=0, & \, k \in \{1,4,7\} \\
    \tilde{S}_k\ket{\psi}=\ket{\psi}, & \, \forall 1 \leq k \leq 9 \\
    X_k^2=Z_k^2=1, & \, \forall k \in \{2,3,5,6,8,9\} 
    \end{Bmatrix} \Rightarrow \begin{Bmatrix}
    X_k^2=Z_k^2=1, & \, \forall k \\
    \{X_k, Z_k\}=0, & \, \forall k
    \end{Bmatrix}
\end{align*}    
\end{theorem}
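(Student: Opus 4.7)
The plan is to follow the two-stage blueprint already used for the $[[5,1,3]]$ code in Theorem \ref{thm:513}: first promote $X_k^2\ket{\psi}=Z_k^2\ket{\psi}=\ket{\psi}$ from the six trusted sites $\{2,3,5,6,8,9\}$ to the three anticommuting sites $\{1,4,7\}$, and then derive $\{X_k,Z_k\}\ket{\psi}=0$ at each of the six trusted sites by pulling each anticommutator back to one already known at $\{1,4,7\}$.

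For the squaring step, the $Z$-side is immediate: since $\tilde{S}_1=Z_1Z_2$, $\tilde{S}_3=Z_4Z_5$, $\tilde{S}_5=Z_7Z_8$ and $Z_k^2\ket{\psi}=\ket{\psi}$ for $k\in\{2,5,8\}$, squaring $\tilde{S}_k\ket{\psi}=\ket{\psi}$ and commuting $Z_2^2, Z_5^2, Z_8^2$ past the remaining operators (which act on different sites) yields $Z_1^2\ket{\psi}=Z_4^2\ket{\psi}=Z_7^2\ket{\psi}=\ket{\psi}$. The $X$-side is the delicate part and explains why the extra operator $\tilde{S}_9$ has been added: using $X_2^2\ket{\psi}=X_3^2\ket{\psi}=\ket{\psi}$ together with the same commutation trick,
\[
    \ket{\psi} = \tilde{S}_7\tilde{S}_8\ket{\psi} = X_1^2\,X_4X_5X_6X_7X_8X_9\ket{\psi} = X_1^2\,\tilde{S}_9\ket{\psi} = X_1^2\ket{\psi},
\]
so the independent constraint $\tilde{S}_9\ket{\psi}=\ket{\psi}$ is exactly what converts the $\tilde{S}_7\tilde{S}_8$ identity into $X_1^2\ket{\psi}=\ket{\psi}$. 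The symmetric pairings $\tilde{S}_7\tilde{S}_9$ and $\tilde{S}_8\tilde{S}_9$ give $X_4^2\ket{\psi}=\ket{\psi}$ and $X_7^2\ket{\psi}=\ket{\psi}$ by the same computation, completing the squaring stage.

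The anticommutation step then propagates $\{X_k,Z_k\}\ket{\psi}=0$ for $k\in\{1,4,7\}$ outward. At site $2$, the relation $\tilde{S}_1\ket{\psi}=\ket{\psi}$ together with $Z_2^2\ket{\psi}=\ket{\psi}$ gives $Z_2\ket{\psi}=Z_1\ket{\psi}$, while $\tilde{S}_7\ket{\psi}=\ket{\psi}$ and $X_2^2\ket{\psi}=\ket{\psi}$ give $X_2\ket{\psi}=X_1X_3X_4X_5X_6\ket{\psi}$; substituting both into $\{X_2,Z_2\}\ket{\psi}$, commuting across sites, and invoking the hypothesis $\{X_1,Z_1\}\ket{\psi}=0$ yields
\[
    \{X_2,Z_2\}\ket{\psi} = \{X_1,Z_1\}\,X_3X_4X_5X_6\ket{\psi}=0.
\]
The identical argument covers site $3$ (via $\tilde{S}_2,\tilde{S}_7$), sites $5,6$ (via $\tilde{S}_3,\tilde{S}_4$ paired with $\tilde{S}_7$ or $\tilde{S}_9$), and sites $8,9$ (via $\tilde{S}_5,\tilde{S}_6$ paired with $\tilde{S}_8$ or $\tilde{S}_9$). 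The main obstacle is concentrated entirely in the squaring step: with $X^2=1$ not yet available at sites $1,4,7$, the two inherited Shor stabilizers $\tilde{S}_7,\tilde{S}_8$ alone cannot separate the squared contributions from the linear ones, and the additional relation $\tilde{S}_9$---dependent at the Pauli level but independent at the $X,Z$-operator level---is precisely what breaks the symmetry and isolates each $X_k^2$.
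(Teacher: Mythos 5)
Your proof is correct and follows the same two-stage blueprint as the paper (first establish the missing squares at sites $1,4,7$, then transfer the anticommutation to the remaining six sites), but the key squaring sub-step is handled by a genuinely different and more elementary argument. The paper squares the three $X$-type constraints to get $X_1^2X_4^2\ket{\psi}=X_1^2X_7^2\ket{\psi}=X_4^2X_7^2\ket{\psi}=\ket{\psi}$, multiplies them to obtain $X_1^4\ket{\psi}=\ket{\psi}$, and then invokes a spectral argument (a Hermitian $A$ with $A^4\ket{\psi}=\ket{\psi}$ satisfies $A^2\ket{\psi}=\ket{\psi}$) to descend to $X_1^2$. You instead multiply distinct pairs, $\tilde{S}_7\tilde{S}_8=X_1^2\,X_4X_5X_6X_7X_8X_9$, and cancel the residual string with the third constraint $\tilde{S}_9\ket{\psi}=\ket{\psi}$ (cyclically, $\tilde{S}_7\tilde{S}_9=X_4^2\tilde{S}_8$ and $\tilde{S}_8\tilde{S}_9=X_7^2\tilde{S}_7$), obtaining $X_k^2\ket{\psi}=\ket{\psi}$ for $k\in\{1,4,7\}$ directly, with no Hermiticity or diagonalization needed; this is cleaner and more self-contained, since the abstract hypotheses of the theorem do not state Hermiticity of $X_{1},X_4,X_7$, and it makes the role of $\tilde{S}_9$ equally transparent. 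Your anticommutation stage uses the same stabilizer pairings as the paper ($\tilde{S}_1,\tilde{S}_2$ with $\tilde{S}_7$; $\tilde{S}_3,\tilde{S}_4$ with $\tilde{S}_7$ or $\tilde{S}_9$; $\tilde{S}_5,\tilde{S}_6$ with $\tilde{S}_8$ or $\tilde{S}_9$), but run in the opposite orientation: you isolate $X_2$ and $Z_2$ using only the assumed operator identities $X_2^2=Z_2^2=1$ and land on $\{X_2,Z_2\}\ket{\psi}=\{X_1,Z_1\}X_3X_4X_5X_6\ket{\psi}=0$, which vanishes because $\{X_1,Z_1\}=0$ holds as an operator. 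This orientation, combined with doing the squaring step first, also sidesteps an ordering subtlety in the paper's write-up, where the relation $X_1\ket{\psi}=X_2X_3X_4X_5X_6\ket{\psi}$ is used (and the factor $X_3X_4X_5X_6$ implicitly stripped) before $X_1^2\ket{\psi}=\ket{\psi}$ or $X_4^2\ket{\psi}=\ket{\psi}$ has been established.
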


\begin{proof}
First, using $\tilde{S}_i^2\ket{\psi}=\ket{\psi}$ for all $i \leq 9$ implies $Z_1^2\ket{\psi}=Z_4^2\ket{\psi}=Z_7^2\ket{\psi}=\ket{\psi}$. By considering $(\tilde{S}_1, \tilde{S}_7)$ and $(\tilde{S}_2, \tilde{S}_7)$, we have
\begin{align*}
& Z_1\ket{\psi}=Z_2\ket{\psi}, \quad X_1\ket{\psi}=X_2X_3X_4X_5X_6\ket{\psi} \\
    \Rightarrow & \{Z_1, X_1 \}\ket{\psi} = \{Z_2, X_2 \} X_3X_4X_5X_6\ket{\psi} \\
    \Rightarrow & \{Z_2, X_2 \} =0.
\end{align*}
A similar proof gives $\{Z_3, X_3\} = 0$. Using the anti-symmetric property, $\{Z_4, X_4\} = 0$ and comparing $(\tilde{S}_3, \tilde{S}_7)$ and $(\tilde{S}_4, \tilde{S}_7)$ imply $\{Z_5, X_5\} = \{Z_6, X_6\} = 0$. Using the anti-symmetric property, $\{Z_7, X_7\} = 0$ and comparing $(\tilde{S}_5, \tilde{S}_8)$ and $(\tilde{S}_6, \tilde{S}_8)$ imply $\{Z_8, X_8\} = \{Z_9, X_9\} = 0$. The remaining part is to show that $X_1^2\ket{\psi} = X_4^2\ket{\psi} = X_7^2\ket{\psi}$.

We now use $\tilde{S}_7^2\ket{\psi}=\tilde{S}_8^2\ket{\psi}=\tilde{S}_9^2\ket{\psi}=\ket{\psi}$ and using $X_k^2=Z_k^2=1$ for all $k \in \{2,3,5,6,8,9\}$, we show $X_1^2X_4^2\ket{\psi}=X_1^2X_7^2\ket{\psi}=X_4^2X_7^2\ket{\psi}=\ket{\psi}$. Finally, this implies that $X_1^4\ket{\psi}=\ket{\psi}$.

Now we appeal to the following claim: if $A$ is a Hermitian operator and $A^4\ket{\psi}=\ket{\psi}$, it implies $A^2\ket{\psi}=\ket{\psi}$. The reason is trivial since $A$ can be diagonalized, so there exists a unitary $U$ such that $A=UDU^{-1}$. However, $A$ is Hermitian and $A^4=1$ implies that the eigenvalue of $A$ on the codespace can only be $\pm 1$. We can see $D^2=1$ implies $A^2\ket{\psi}=\ket{\psi}$. We then have $X_1^2\ket{\psi}=X_4^2\ket{\psi}=X_7^2\ket{\psi}=1$.
\end{proof}

We show the following theorem of the self-test of the Steane code.
\begin{theorem}\label{thm:713}
\begin{align*}
    \begin{Bmatrix}
    \{X_k, Z_k\}=0, & \, k \in \{2,3,5,7\} \\
    \tilde{S}_k\ket{\psi}=\ket{\psi}, & \, \forall 1 \leq k \leq 8 \\
    X_k^2=Z_k^2=1, & \, \forall k \in \{1,4,6\} 
    \end{Bmatrix} \Rightarrow \begin{Bmatrix}
    X_k^2=Z_k^2=1, & \, \forall k \geq 1 \\
    \{X_k, Z_k\}=0, & \, \forall k \geq 1
    \end{Bmatrix}
\end{align*}    
\end{theorem}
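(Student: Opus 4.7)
My plan is to mirror the two-stage structure already used for Theorems~\ref{thm:513} and~\ref{thm:913}: first establish $X_k^2\ket{\psi}=Z_k^2\ket{\psi}=\ket{\psi}$ on the four sites $k\in\{2,3,5,7\}$ where this is not a hypothesis, then establish $\{X_k,Z_k\}\ket{\psi}=0$ on the three sites $k\in\{1,4,6\}$.

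For the idempotency stage, I would square each $\tilde{S}_k\ket{\psi}=\ket{\psi}$ and use $X_k^2=Z_k^2=\si$ on $\{1,4,6\}$ to eliminate the known factors. On the $X$-side this produces $X_5^2X_7^2\ket{\psi}=\ket{\psi}$, $X_2^2X_3^2X_7^2\ket{\psi}=\ket{\psi}$, $X_3^2X_5^2X_7^2\ket{\psi}=\ket{\psi}$, and---crucially from the added stabilizer $\tilde{S}_4$---$X_2^2X_5^2\ket{\psi}=\ket{\psi}$. Comparing the first and third gives $X_3^2\ket{\psi}=\ket{\psi}$, and combining this with the second gives $X_2^2X_7^2\ket{\psi}=\ket{\psi}$. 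Passing to a joint eigenbasis of the commuting positive semi-definite operators $X_2^2,X_5^2,X_7^2$, the three pair relations $\lambda_2\lambda_5=\lambda_5\lambda_7=\lambda_2\lambda_7=1$ with $\lambda_i\geq 0$ pin each $\lambda_i$ to $1$ on the support of $\ket{\psi}$. The $Z$-side is entirely symmetric, with $\tilde{S}_8$ playing the role of $\tilde{S}_4$.

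For the anticommutation stage, site $1$ follows the $[[5,1,3]]$ recipe of Theorem~\ref{thm:513} verbatim: from $\tilde{S}_3,\tilde{S}_7$ I read $X_1\ket{\psi}=X_3X_5X_7\ket{\psi}$ and $Z_1\ket{\psi}=Z_3Z_5Z_7\ket{\psi}$, substitute into $\{X_1,Z_1\}\ket{\psi}$, and pick up one sign flip from each of the three anticommuting sites $\{3,5,7\}$ to get an exact cancellation. The main obstacle is sites $4$ and $6$: pairing $(\tilde{S}_1,\tilde{S}_5)$ to isolate site $4$ leaves a residual factor on site $6$ whose anticommutation is not yet known, while pairing $(\tilde{S}_4,\tilde{S}_8)$ to isolate site $6$ leaves a residual factor on site $1$. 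Without the added stabilizers, sites $4$ and $6$ would be locked in a symmetric circular dependency with no way to break in.

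The resolution is an ordered cascade enabled by the added stabilizers. First, via $(\tilde{S}_4,\tilde{S}_8)$ together with the hypothesized anticommutations on sites $\{2,5\}$, I would reduce $\{X_6,Z_6\}\ket{\psi}$ to $\{X_1,Z_1\}X_2Z_2X_5Z_5\ket{\psi}$; since $\{X_1,Z_1\}$ acts only on site $1$ it commutes with the remaining single-site factors, and the already-proved $\{X_1,Z_1\}\ket{\psi}=0$ yields $\{X_6,Z_6\}\ket{\psi}=0$. Second, via $(\tilde{S}_1,\tilde{S}_5)$ and the anticommutations on sites $\{5,7\}$, I would reduce $\{X_4,Z_4\}\ket{\psi}$ to $X_5Z_5X_7Z_7\{X_6,Z_6\}\ket{\psi}$, which vanishes by the same commute-and-annihilate move. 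The inclusion of $\tilde{S}_4$ and $\tilde{S}_8$---redundant for the usual Steane stabilizer group but genuinely independent relations in the present setting, since $X^2=Z^2=\si$ has not yet been established on $\{2,3,5,7\}$---is exactly what makes the argument close.
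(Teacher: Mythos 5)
Your proof is correct and follows the same two-stage strategy as the paper's proof of Theorem~\ref{thm:713}: idempotency on $\{2,3,5,7\}$ by squaring the $\tilde{S}_k$ and eliminating the known squares on $\{1,4,6\}$, and anticommutation on $\{1,4,6\}$ by pairing an $X$-type with a $Z$-type stabilizer (the stages are independent, so taking them in the opposite order is immaterial, and your joint-eigenbasis argument for the commuting positive operators $X_2^2,X_5^2,X_7^2$ is the same idea as the paper's ``Hermitian with $A^4\ket{\psi}=\ket{\psi}$ implies $A^2\ket{\psi}=\ket{\psi}$'' step). The one genuine deviation is site $6$: you route it through the added pair $(\tilde{S}_4,\tilde{S}_8)$ together with the previously derived $\{X_1,Z_1\}\ket{\psi}=0$, whereas the paper isolates site $6$ with the original pair $(\tilde{S}_2,\tilde{S}_6)$, whose residual factors sit on sites $2,3,7$ and hence anticommute by hypothesis alone. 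Both routes are valid, but your side claim that ``without the added stabilizers, sites $4$ and $6$ would be locked in a symmetric circular dependency'' is false: $(\tilde{S}_2,\tilde{S}_6)$ breaks in directly, and the cascade $1\to 6\to 4$ built only from the original generators closes without ever invoking $\tilde{S}_4$ or $\tilde{S}_8$. The added operators are genuinely indispensable only where you also (correctly) use them, namely in the idempotency stage: $\tilde{S}_1,\tilde{S}_2,\tilde{S}_3$ alone give only two independent pairwise relations among $X_2^2,X_5^2,X_7^2$ on the support (leaving $X_7^2$ unpinned), and $\tilde{S}_4$ (respectively $\tilde{S}_8$) supplies the missing relation $X_2^2X_5^2\ket{\psi}=\ket{\psi}$ (respectively $Z_2^2Z_5^2\ket{\psi}=\ket{\psi}$).
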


Here we add the following two extra operators:
\begin{align*}
    \tilde{S}_4=X_1X_2X_5X_6, \quad \tilde{S}_8=Z_1Z_2Z_5Z_6
\end{align*}

\begin{proof}
We first show that $\{X_k, Z_k\}\ket{\psi}=0$ for all $k \in \{1,4,6\}$. We start by picking $\tilde{S}_3$ and $\tilde{S}_7$ and notice that $X_5Z_5X_7Z_7=Z_5X_5Z_7X_7$.
\begin{align*}
&X_1\ket{\psi}=X_3X_5X_7\ket{\psi}, \quad Z_1\ket{\psi}=Z_3Z_5Z_7\ket{\psi} \\
&\Rightarrow \{X_1, Z_1\}\ket{\psi}=\{X_3, Z_3\}X_5Z_5X_7Z_7\ket{\psi} \\
&\Rightarrow \{X_1, Z_1\}\ket{\psi}=0.
\end{align*}

Next, we pick $\tilde{S}_2$ and $\tilde{S}_6$ and notice that $X_7Z_7X_3Z_3=Z_7X_7Z_3X_3$.
\begin{align*}
&X_6\ket{\psi}=X_2X_3X_7\ket{\psi}, \quad Z_6\ket{\psi}=Z_2Z_3Z_7\ket{\psi} \\
&\Rightarrow \{X_6, Z_6\}\ket{\psi}=\{X_2, Z_2\}X_7Z_7X_3Z_3\ket{\psi} \\
&\Rightarrow \{X_6, Z_6\}\ket{\psi}=0.
\end{align*}

Finally, we pick $\tilde{S}_1$ and $\tilde{S}_5$ and notice that $X_6Z_6X_7Z_7=Z_6X_6Z_7X_7$.
\begin{align*}
&X_4\ket{\psi}=X_5X_6X_7\ket{\psi}, \quad Z_4\ket{\psi}=Z_5Z_6Z_7\ket{\psi} \\
&\Rightarrow \{X_4, Z_4\}\ket{\psi}=\{X_5, Z_5\}X_6Z_6X_7Z_7\ket{\psi} \\
&\Rightarrow \{X_4, Z_4\}\ket{\psi}=0.
\end{align*}

Finally, we want to show $X_2^2=X_3^2=X_5^2=X_7^2=Z_2^2=Z_3^2=Z_5^2=Z_7^2=1$ acting on $\ket{\psi}$. We then use $\tilde{S}_1^2=\tilde{S}_2^2=\tilde{S}_3^2=\tilde{S}_4^2=1$ on $\ket{\psi}$ implying 
\begin{align*}
X_5^2X_7^2\ket{\psi}=X_2^2X_3^2X_7^2\ket{\psi}=X_3^2X_5^2X_7^2\ket{\psi}=X_2^2X_5^2\ket{\psi}=\ket{\psi}.
\end{align*}
We then use a similar method as before to show $X_2^2=X_3^2=X_5^2=X_7^2=Z_2^2=Z_3^2=Z_5^2=Z_7^2=1$.
\end{proof}

In this final section, we will show that the $[[5,1,3]]$ code is very special, as a similar proof cannot hold for the version of the Modular qudit code with $q > 2$ \cite{eczoo_qecc, eczoo_qudit_5_1_3, chau1997five, rains1999nonbinary}. Let us write it very explicitly. In such a case, let us review the stabilizer. The Modular-qudit stabilizer code generalizes the five-qubit perfect code using properties of the multiplicative group $\mathbb{Z}_q$.
It has the following 4 stabilizers:
\begin{align*}
    S_1=\sx_1\sz_2\sz_{3}^{\dagger}\sx_{4}^{\dagger}, \quad S_2=\sx_2\sz_3\sz_{4}^{\dagger}\sx_{5}^{\dagger}, \quad \nonumber\\
    S_3=\sx_3\sz_4\sz_{5}^{\dagger}\sx_{1}^{\dagger}, \quad S_4=\sx_4\sz_5\sz_{1}^{\dagger}\sx_{2}^{\dagger}.
\end{align*}
Notice that $\sz_k\sx_k=\omega_q \sx_k \sz_k$.
We then ask whether a similar theorem to Theorem \ref{thm:513} can hold as when $q=2$. Namely, do we have such a result? Here, $\omega_{q}=e^{2\pi i/q}$ is the $q$-th root of unity, so $\omega_2=-1$.

\begin{theorem}\label{thm:q513}
We will show this is true only when $q=2$. 
\begin{align*}
    \begin{Bmatrix}
    Z_1X_1=\omega_{q}X_1Z_1, \\
    \tilde{S}_k\ket{\psi}=\ket{\psi},  \forall k \in [4], \\
    X_k^q=Z_k^q=1,  2 \leq k \leq 5, \\
    X_k^{\dagger}=X_k^{-1},  Z_k^{\dagger}=Z_k^{-1}, \forall k.
    \end{Bmatrix} \Rightarrow^{?}  \begin{Bmatrix}
    X_k^2=Z_k^2=1, & \forall k, \\
    Z_kX_k=\omega_{q}X_kZ_k, & \forall k.
    \end{Bmatrix}
\end{align*}   
\end{theorem}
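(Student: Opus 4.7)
The plan is to refute the implication for $q>2$ by exhibiting a canonical counterexample while observing that the case $q=2$ collapses to the already proved Theorem \ref{thm:513}. When $q=2$ the hypothesis $X_k^q=Z_k^q=1$ is literally $X_k^2=Z_k^2=1$, the relation $Z_kX_k=\omega_qX_kZ_k$ becomes $\{X_k,Z_k\}=0$ because $\omega_2=-1$, and the four stabilizers $S_1,\dots,S_4$ reduce to those of the qubit $[[5,1,3]]$ code; the desired implication therefore follows verbatim from Theorem \ref{thm:513}.

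For $q>2$ I would take the canonical realization itself as the counterexample. Let $\sx\ket{j}=\ket{j+1\bmod q}$ and $\sz\ket{j}=\omega_q^{j}\ket{j}$ be the standard qudit generalized Paulis, set $X_k=\sx$ and $Z_k=\sz$ on every site $k\in[5]$, and let $\ket{\psi}$ be any codeword of the modular-qudit $[[5,1,3]]$ code whose existence is ensured by \cite{chau1997five,rains1999nonbinary}. Every hypothesis condition then holds tautologically: $\sz\sx=\omega_q\sx\sz$ supplies the site-$1$ braiding, the identities $\sx^q=\sz^q=\si$ supply the $q$-th power relations on sites $2$--$5$, and $\tilde S_k\ket{\psi}=\ket{\psi}$ follows from $\ket{\psi}$ lying in the codespace. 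The conclusion, however, demands $X_k^2\ket{\psi}=\ket{\psi}$, and this fails immediately because $\sx^2$ is the cyclic shift by $2$ on $\mathbb{Z}_q$, which is not the identity for any $q>2$; equivalently, the spectrum of $\sx$ contains all $q$-th roots of unity rather than only $\pm 1$.

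The substantive part of the no-go, which I would spell out explicitly, is to locate where the proof of Theorem \ref{thm:513} actually breaks, and this is the main obstacle. Repeating the argument of that proof on the modular-qudit stabilizers — solving $\tilde S_1\ket{\psi}=\ket{\psi}$ for $Z_2\ket{\psi}$ and $\tilde S_4\ket{\psi}=\ket{\psi}$ for $X_2\ket{\psi}$ — still yields $Z_2X_2\ket{\psi}=\omega_qX_2Z_2\ket{\psi}$: the lone factor of $\omega_q^{-1}$ produced when $X_1^{-1}$ is commuted past $Z_1^{-1}$ is not cancelled by anything else and instead delivers precisely the expected qudit braiding on site $2$, so the braiding half of the would-be conclusion does propagate. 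What has no qudit analogue is the squared-to-identity step: in the qubit case it is extracted from $\tilde S_1^2=X_1^2$ after setting the other single-site squares to $\si$, whereas for $q>2$ the analogous computation gives only $X_1^q\ket{\psi}=\ket{\psi}$, which is strictly weaker than $X_1^2\ket{\psi}=\ket{\psi}$. The qudit Pauli construction is therefore a canonical witness to a structural obstruction rather than an incidental one: the conclusion is a genuinely qubit-specific statement that no repetition of the SOS/stabilizer manipulation of Theorem \ref{thm:513} could ever produce for $q>2$.
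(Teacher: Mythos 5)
Your argument is correct for the theorem as literally stated, and it takes a genuinely different route from the paper's. The paper does not exhibit a counterexample at all: it reruns the site-$1$-to-site-$2$ propagation of Theorem \ref{thm:513} on $\tilde S_1=X_1Z_2Z_3^{\dagger}X_4^{\dagger}$ and $\tilde S_4=X_4Z_5Z_1^{\dagger}X_2^{\dagger}$ and asserts that the relation reaches site $2$ with the inverted phase, $(Z_2X_2-\omega_q^{-1}X_2Z_2)\ket{\psi}=0$, declaring this incompatible unless $\omega_q=\omega_q^{-1}$. Your bookkeeping is the correct one: substituting $X_2\ket{\psi}=Z_1^{\dagger}X_4Z_5\ket{\psi}$ first and then $Z_2\ket{\psi}=X_1^{\dagger}Z_3X_4\ket{\psi}$ gives $Z_2X_2\ket{\psi}=Z_1^{\dagger}X_1^{\dagger}Z_3X_4^{2}Z_5\ket{\psi}$ and $X_2Z_2\ket{\psi}=X_1^{\dagger}Z_1^{\dagger}Z_3X_4^{2}Z_5\ket{\psi}$, hence $Z_2X_2\ket{\psi}=\omega_qX_2Z_2\ket{\psi}$ --- the same phase as the hypothesis, not its inverse; the paper's displayed equation has the two site-$1$ orderings interchanged, a slip that is invisible in the qubit case because there only the anticommutator is needed. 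Your canonical realization ($X_k=\sx$, $Z_k=\sz$, $\ket{\psi}$ a qudit codeword) doubles as a decisive sanity check: it satisfies every hypothesis, so no genuine inconsistency can be a consequence of them, and it directly violates the written conclusion $X_k^2=Z_k^2=1$, which is what actually defeats the implication for $q>2$. What your route buys is a refutation that survives scrutiny plus a sharper diagnosis (only the squares-to-identity extraction lacks a qudit analogue, the braiding does propagate); what the paper's route was aiming for, had the computation been right, is the stronger claim that even the qudit-native commutation structure cannot be certified.

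Two points to tighten. First, your counterexample refutes the conclusion as an operator identity; if one reads it on the state instead, you should add that $\sx_k^2\ket{\psi}=\ket{\psi}$ is impossible because the code has distance $3$, so no weight-one error can act trivially on the codespace --- the observation that $\sx^2\neq\si$ is not by itself sufficient. Second, be explicit that your no-go concerns only the qubit-form conclusion written in the theorem: for the natural qudit reading (with $X_k^q\ket{\psi}=Z_k^q\ket{\psi}=\ket{\psi}$ and $\omega_q$-braiding on every site) your own propagation remark shows the site-$2$ braiding and the site-$1$ $q$-th powers do follow, so whether that version of the self-test holds for $q>2$ is left open both by your argument and by the paper's.
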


\begin{proof}
The proof is elementary. We rewrite the conditions for $k=1,4$.
\begin{align}
    X_2 \ket{\psi} & = Z_1^{\dagger} X_4 Z_5 \ket{\psi}, \nonumber \\
    Z_2 \ket{\psi} & = X_1^{\dagger} Z_3 X_4 \ket{\psi},
\end{align}
which leads us to
\begin{equation}
    (Z_2X_2 -\omega_q^{-1} X_2Z_2) \ket{\psi} = (X_1^{\dagger}Z_1^{\dagger} -\omega_q^{-1}Z_1^{\dagger}X_1^{\dagger})Z_3 Z_5 \ket{\psi} = 0.
\end{equation}
This implies that 
\[
(Z_2X_2 -\omega_q^{-1} X_2Z_2) \ket{\psi} = 0,
\]
which is incompatible with 
\[
(Z_1X_1 -\omega_q X_1Z_1) \ket{\psi} = 0
\]
unless $\omega_q=\omega_q^{-1}$, which is indeed the qubit case, otherwise $\omega_q=1$.
This indicates that such a self-test property is just for qubits, even in the $[[5,1,3]]$ case.
\end{proof}

\section{}\label{app:ap2}
In this section, we manage to write down the Bell inequality as a multi-variable polynomial in terms of the local operator $A$. Notice that $A$ does not need to commute in the local site. Here we will write $I_5'$, $I_7'$, and $I_9'$ explicitly, and in Appendix \ref{app:ap3}, we will finish the final ingredient which shows that the maximal violation of $I_5'$, $I_7'$, and $I_9'$ indeed will make the state the desired state. Now let us proceed with $I_5'$ first since one can compare our example to \cite{baccari2020device} and it is also the simplest error-correcting code.

From Eq. \ref{eq:i5}, we can let $I_{5}'=-\alpha_0\tilde{P}^2+2\alpha_0 \tilde{P}+2\sum_{i=1}^{4}\alpha_i \tilde{S}_i$. Here $\tilde{P}=\cos 2\theta \tilde{Z} + \sin 2\theta \tilde{X}$, and $\tilde{P}^2=\cos^2(2\theta)Z_1^2+\sin^2(2\theta)X_1^2+\sin(2\theta)\cos(2\theta)\{\tilde{X}, \tilde{Z}\}$. Now we have: $\bar{X}=\frac{(A_0^1+A_1^1)}{2 \cos \mu}\prod_{i=2}^{5}A_0^{i}$, $\bar{Z}=\frac{(A_0^1-A_1^1)}{2 \sin \mu}\prod_{i=2}^{5}A_1^{i}$.

We have derived the following expression for $I_5'$:
\begin{align*}
    I_5' &= -\alpha_0  (\sin^2 ( 2\theta) \frac{(A_0^{1} + A_1^{1})^2}{4 \cos^2 \mu } - \cos^2 ( 2\theta) \frac{(A_0^{1} - A_1^{1})^2}{4 \sin^2 \mu} \nonumber \\
    &\quad - \sin(2\theta) \cos(2\theta) \left\{ \frac{(A_0^1 + A_1^1)}{2 \cos \mu} \prod_{i=2}^{5}A_0^{i}, \frac{(A_0^1 - A_1^1)}{2 \sin \mu} \prod_{i=2}^{5}A_0^{i} \right\} \nonumber \\
    &\quad + 2\alpha_0 \left( \cos 2\theta \frac{(A_0^1 - A_1^1)}{2 \sin \mu} \prod_{i=2}^{5}A_1^{i} + \sin 2\theta \frac{(A_0^1 + A_1^1)}{2 \cos \mu} \prod_{i=2}^{5}A_0^{i} \right) \nonumber \\
    &\quad + 2\alpha_1 \frac{A_0^{1} + A_1^{1}}{2 \cos \mu} A_1^{2} A_1^{3} A_0^{4} + 2\alpha_2 A_0^{2} A_1^{3} A_1^{4} A_0^{5} \nonumber \\
    &\quad + 2\alpha_3 \frac{(A_0^{1} + A_1^{1})}{2 \cos \mu} A_0^{3} A_1^{4} A_1^{5} + 2\alpha_4 \frac{(A_0^{1} - A_1^{1})}{2 \sin \mu} A_0^{2} A_0^{4} A_1^{5}.
\end{align*}

If we let $\alpha_0 = 0$, $\alpha_1 = \sqrt{2}$, $\alpha_2 = 1$, $\alpha_3 = \sqrt{2}$, $\alpha_4 = 2\sqrt{2}$, and $\mu = \frac{\pi}{4}$, we recover the Bell inequality from \cite{baccari2020device} (Notice that it is twice compared to \cite{baccari2020device}):
\begin{align}\label{eq:I5}
    I_5 &= 2 (A_0^{1} + A_1^{1}) A_1^{2} A_1^{3} A_0^{4} + 2 A_0^{2} A_1^{3} A_1^{4} A_0^{5} \nonumber \\
        &\quad + 2 (A_0^{1} + A_1^{1}) A_0^{3} A_1^{4} A_1^{5} + 4 (A_0^{1} - A_1^{1}) A_0^{2} A_0^{4} A_1^{5}.
\end{align}

So we indeed show that any behavior achieving the maximal quantum violation of $I_5$ self-tests the entangled code subspace \cite{baccari2020device}.

\paragraph{}

Let us proceed with $I_7'$. We define $\mu = \pi/4$. We can let $I_{7}'=2\alpha_0 \tilde{P}+2\sum_{i=1}^{8}\alpha_i \tilde{S}_i - \sum_{i=1}^{8}\alpha_i \tilde{S}_i^2 - \alpha_0\tilde{P}^2$. Notice here, we need to choose $\{2,3,5,7\}$ as our antisymmetric pairs. So $\tilde{P} = \cos 2\theta \tilde{Z} + \sin 2\theta \tilde{X}$ and $\tilde{P}^2 = \cos^2(2\theta)Z_2^2 Z_3^2 Z_5^2 Z_7^2 + \sin^2(2\theta)X_2^2 X_3^2 X_5^2 X_7^2 + \sin(2\theta)\cos(2\theta)\{\tilde{X}, \tilde{Z}\}$. We recall the stabilizers $\tilde{S}$:

\begin{align*}
&\tilde{S}_1 = X_4 X_5 X_6 X_7, \quad \tilde{S}_2 = X_2 X_3 X_6 X_7, \\
&\tilde{S}_3 = X_1 X_3 X_5 X_7, \quad \tilde{S}_5 = Z_4 Z_5 Z_6 Z_7, \\
&\tilde{S}_6 = Z_2 Z_3 Z_6 Z_7, \quad \tilde{S}_7 = Z_1 Z_3 Z_5 Z_7.
\end{align*}

We add another two operators:

\begin{align*}
    \tilde{S}_4 = X_1 X_2 X_5 X_6, \quad \tilde{S}_8 = Z_1 Z_2 Z_5 Z_6.
\end{align*}

Each $\tilde{X}, \tilde{Z}, X_i, Z_i$ can be expressed in terms of the following rule:

\begin{align*}
   & X_{k} = \frac{(A_0^1 + A_1^1)}{\sqrt{2}}, \quad Z_{k} =  \frac{(A_0^1 - A_1^1)}{\sqrt{2}}, \quad k \in \{2,3,5,7\}, \nonumber \\
   & X_{k} = A_0^{k}, \quad Z_{k} =  A_1^{k}, \quad k \in [7] / \{2,3,5,7\}.
\end{align*}

Plugging all of them inside $I_{7}' = 2\alpha_0 \tilde{P} + 2\sum_{i=1}^{8}\alpha_i \tilde{S}_i - \sum_{i=1}^{8}\alpha_i \tilde{S}_i^2 - \alpha_0 \tilde{P}^2$, we obtain $I_7'$ in terms of various monomials $A$'s.

\paragraph{}
Finally, let us proceed with $I_9'$. We will just sketch the construction since the main idea and previous examples already provide clear illustrations. We define $\mu = \pi/4$. We can let $I_{9}' = 2\alpha_0 \tilde{P} + 2\sum_{i=1}^{9}\alpha_i \tilde{S}_i - \sum_{i=1}^{9}\alpha_i \tilde{S}_i^2 - \alpha_0\tilde{P}^2$. Notice here, we need to choose $\{1,4,7\}$ as our antisymmetric pairs. So $\tilde{P} = \cos 2\theta \tilde{Z} + \sin 2\theta \tilde{X}$ and $\tilde{P}^2 = \cos^2(2\theta) Z_1^2 Z_4^2 Z_7^2 + \sin^2(2\theta) X_1^2 X_4^2 X_7^2 + \sin(2\theta) \cos(2\theta) \{\tilde{X}, \tilde{Z}\}$. Now, the following are our $\tilde{S}$:

\begin{align*}
    &\tilde{S}_1 = Z_1 Z_2, \quad \tilde{S}_2 = Z_1 Z_3, \\
    &\tilde{S}_3 = Z_4 Z_5, \quad \tilde{S}_4 = Z_4 Z_6, \\
    &\tilde{S}_5 = Z_7 Z_8, \quad \tilde{S}_6 = Z_7 Z_9, \\
    &\tilde{S}_7 = X_1 X_2 X_3 X_4 X_5 X_6, \quad \tilde{S}_8 = X_1 X_2 X_3 X_7 X_8 X_9,
\end{align*}

Here we define $\tilde{S}_9 = X_4 X_5 X_6 X_7 X_8 X_9$, which is critical. Finally, each $\tilde{X}, \tilde{Z}, X_i, Z_i$ can be expressed in terms of the following rule:

\begin{align*}
    &X_{k} = \frac{(A_0^1 + A_1^1)}{\sqrt{2}}, \quad Z_{k} = \frac{(A_0^1 - A_1^1)}{\sqrt{2}}, \quad k \in \{1,4,7\}, \\
    &X_{k} = A_0^{k}, \quad Z_{k} = A_1^{k}, \quad k \in [9] / \{1,4,7\}.
\end{align*}

Plugging all of them inside $I_{9}' = 2\alpha_0 \tilde{P} + 2\sum_{i=1}^{9}\alpha_i \tilde{S}_i - \sum_{i=1}^{9}\alpha_i \tilde{S}_i^2 - \alpha_0 \tilde{P}^2$, we obtain $I_9'$ in terms of various monomials $A$'s.

\section{}\label{app:ap3}
Here we review the final ingredient proof of \cite{baccari2020device, baccari2020scalable}. After all the hard work, the final proof is somewhat tautological. We include it for completeness.

Given the above anticommutation relations for operators acting on all sites, we can now use Lemma \ref{lem:qubit} to introduce a set of local unitary operations $U_i: \mathcal{H}_{P_i} \to \mathcal{H}_{P_i}$ such that:
\begin{eqnarray}
U_iZ_iU_i^{\dagger} = \sz_i \otimes \si, \quad U_iX_iU_i^{\dagger} = \sx_i \otimes \si
\end{eqnarray}
for $i = 1, 2, \ldots, n.$

We define $\ket{\tilde{\psi}} = (\prod_{i=1}^{n} U_i \otimes \si)\ket{\psi}$. Now, using $\tilde{S}_i \ket{\psi} = \ket{\psi}$ implies:
\begin{equation}\label{eq:c3}
S_i \otimes \si \ket{\tilde{\psi}} = \ket{\tilde{\psi}}, \quad \forall i.
\end{equation}
Finally, we claim that the most general form of a state is:
\begin{equation} \label{eq}
\ket{\tilde{\psi}} = c\ket{\psi_1} \otimes \ket{\xi_1} + \sqrt{1-c^2} \ket{\psi_2} \otimes \ket{\xi_2},
\end{equation}
where $\ket{\psi_1}$ and $\ket{\psi_2}$ are two orthonormal states spanning the code space, and $c \in [0,1].$

The whole Hilbert space is $(\mathbb{C}^2)^{n} \otimes \mathcal{H}_{PE}$. We then write the Schmidt decomposition of $\ket{\tilde{\psi}}$ with respect to the tensor product structure:
\begin{eqnarray}
\ket{\tilde{\psi}} = \sum_{j} \lambda_j \ket{\eta_j} \ket{\phi_j},
\end{eqnarray}
where $\ket{\eta_j} \in (\mathbb{C}^2)^n$ and $\ket{\phi_j} \in \mathcal{H}_{PE}$ are orthogonal vectors in their respective Hilbert spaces. Now, eq \ref{eq:c3} forces $\ket{\eta_j}$ to become the codeword. In the case $\alpha_0 > 0$, it further reduces the state to the particular codewords. Now let us assume $\alpha_0 = 0$. Therefore, we can write $\ket{\eta_j} = c_j \ket{\psi_1} + d_j \ket{\psi_2}$ for some complex numbers $c_j, d_j$ satisfying $|c_j|^2 + |d_j|^2 = 1$. $\ket{\psi_1}$ and $\ket{\psi_2}$ are orthonormal states of the particular codespace. Hence, we can write the whole state as:
\begin{equation}
\ket{\tilde{\psi}} = c\ket{\psi_1} \otimes \ket{\xi_1} + \sqrt{1-c^2} \ket{\psi_2} \otimes \ket{\xi_2},
\end{equation}
which completes the proof of eq \ref{eq}.

\section{}\label{app:app4}

We consider the Hilbert space composed of $N$ parties (qubits) $(\mathbb{C}^2)^{\otimes N}$. Given a partition $Q \subset [N]$, $\bar{Q}=[N] \setminus Q$, we say that $\ket{\psi}$ is entangled across $Q|\bar{Q}$ if it cannot be written as $\ket{\psi}=\ket{\psi_Q} \otimes \ket{\psi_{\bar{Q}}}$ for some pure states $\ket{\psi_Q}$ and $\ket{\psi_{\bar{Q}}}$ supported on $Q$ and $\bar{Q}$, respectively. We then call $\ket{\psi}$ genuinely multipartite entangled (GME) if it is entangled across any nontrivial bipartition $Q|\bar{Q}$.

Consider a nontrivial subspace $V \subset (\mathbb{C}^d)^{\otimes N}$. We call $V$ GME \cite{demianowicz2018unextendible} (or, shortly, genuinely entangled) if all pure states belonging to it are GME. \cite{makuta2021self} gives a simple criterion-checking stabilizer subspaces to be genuinely entangled and upper bounds the genuinely entangled stabilizer subspaces of maximal dimension (Theorem 3). The bound is also shown to be tight by providing a particular example of a GME stabilizer subspace.

Additionally, \cite{makuta2021self} constructs such Bell inequalities for stabilizer subspaces that are genuinely entangled and of maximal dimension. For $N=7$, the maximal dimension is $8$. Therefore, this method cannot provide the code space for the Bell inequality of $[[7,1,3]]$.

\end{document}